\newtheorem{thm}{Theorem}[section]
\newtheorem{lemma1}{Theorem}[section]
\newtheorem{lemma}[lemma1]{Lemma}
\begin{document}

\title{Using Node Centrality and Optimal Control to Maximize Information Diffusion in Social Networks}

\author{Kundan Kandhway and Joy Kuri
\thanks{Authors are with the Department of Electronic Systems Engineering, Indian Institute of Science, Bangalore, 560012, India  (e-mail: \{kundan, kuri\}@dese.iisc.ernet.in) .}
}

\maketitle

\thispagestyle{fancy}

\begin{abstract}
We model information dissemination as a susceptible-infected epidemic process and formulate a problem to jointly optimize seeds for the epidemic and time varying resource allocation over the period of a fixed duration campaign running on a social network with a given adjacency matrix. Individuals in the network are grouped according to their centrality measure and each group is influenced by an external control function---implemented through advertisements---during the campaign duration. The aim is to maximize an objective function which is a linear combination of the reward due to the fraction of informed individuals at the deadline, and the aggregated cost of applying controls (advertising) over the campaign duration. We also study a problem variant with a fixed budget constraint. We set up the optimality system using Pontryagin's Maximum Principle from optimal control theory and solve it numerically using the forward-backward sweep technique. Our formulation allows us to compare the performance of various centrality measures (pagerank, degree, closeness and betweenness) in maximizing the spread of a message in the optimal control framework. We find that degree---a simple and local measure---performs well on the three social networks used to demonstrate results: scientific collaboration, Slashdot and Facebook. The optimal strategy targets central nodes when the resource is scarce, but non-central nodes are targeted when the resource is in abundance. Our framework is general and can be used in similar studies for other disease or information spread models---that can be modeled using a system of ordinary differential equations---for a network with a known adjacency matrix.
\end{abstract}

\begin{IEEEkeywords}
Information epidemics, Optimal control, Pontryagin's Maximum Principle, Social networks, Susceptible-Infected.
\end{IEEEkeywords}

\IEEEpeerreviewmaketitle

\section{Introduction}

\IEEEPARstart{A}{ll} major brands and even small businesses have presence in multiple online social media these days. They use social media to promote their products and services. Movies have used social networks for creating strong buzz before their release (\emph{e.g.}, the case of `Hunger Games' \cite{taepke2012hunger}). Paramount Pictures---a movie distributor---and Twitter---a social media company---successfully partnered to promote the movie `Super 8' \cite{shortyawards2011super8}. The success of some of charity campaigns, like Movember (which raised over $\pounds 63.9$m in 2013 for men's health) and \#nomakeupselfie (which helped Cancer Research raise about $\pounds 8$m in a week) would have been impossible without the attention they got in the social media \cite{zoe2014five}. These examples establish the central role played by social networks in information dissemination.

In this paper, we devise optimal strategies to maximize the spread of a message in a social network for a fixed duration campaign. Such problems lie in the broad class of \emph{influence maximization problems}, introduced first in \cite{domingos2001mining}. In \cite{domingos2001mining}, the authors found a set of initial seeds such that their final influence on the network is maximum. This problem was shown to be NP-hard for many spread models, such as linear threshold and independent cascade in \cite{kemp2003maximizing}, and performance of the greedy optimization algorithm was analyzed. The works above averaged multiple runs of Monte-Carlo simulations to estimate influence spread, making the solution methodology computationally inefficient. In \cite{kimura2007extracting, saito2012efficient, ohara2013predictive}, the authors used tools from statistical physics (bond percolation) and predictive modeling to approximately solve the greedy optimization and seed selection problems. In \cite{chen2009efficient, chen2010scalable}, computationally efficient heuristics were introduced to solve the same problem that scaled to large network sizes. Monotonicity and submodularity of the spread functions were used to further speed up the greedy optimization in \cite{goyal2011celf}.

All of the above benchmark studies on influence maximization focused on optimizing the seed nodes---the individuals who start the spreading process at time $t=0$. Once the seeds are selected, the decision maker does not influence the system during the rest of the campaign horizon, $t>0$. In contrast, this paper formulates the resource allocation problem throughout the campaign horizon.

Our approach is as follows. Information diffusion is modeled as a susceptible-infected (SI) epidemic, which is represented mathematically as a set of differential equations. The population is divided into two classes---susceptible individuals are unaware of the message, while the infected ones are informed and spreading the message. We group the nodes of the network, and each group is influenced by an advertisement strategy (control function) throughout the campaign duration which transfers susceptible individuals to the infected class; this is in addition to the epidemic message transfer in the system. The strategies are computed such that a net reward function---a linear combination of the reward due to information spread and the advertisement costs---is maximized.

Our main contribution in this work is to formulate the information diffusion maximization problem on a network with known adjacency matrix as a \emph{joint seed--optimal control problem where nodes can be targeted based on their centrality measures in the network. Here the campaigner has the flexibility to allocate the resource throughout the campaign horizon.} This is done by adjusting a control function---\emph{e.g.}, rate at which advertisements are shown to individuals in their social network timeline during the campaign horizon $[0,T]$---such that the reward function is maximized. Unlike traditional optimization problems, \emph{optimal control problems optimize functions and not real variables} to maximize an objective. These controls affect the evolution of the system which is captured by a set of ordinary differential equations. In addition, our framework jointly optimizes the seeds for the epidemic. Details on optimal control theory and solution techniques using Pontryagin's Maximum Principle can be found in \cite{kirk2012optimal}.

In practice, the advertisements are shown in the social network timeline of the individuals throughout the campaign horizon and not just once at the beginning of the campaign. Thus, the joint seed--optimal control framework developed in this paper is more suitable for real world scenarios than only seed optimization which has been carried out extensively in the literature. There are many companies/entities interested in advertising products/running campaigns in the network. The social network economic planner would want to entertain as many of them as possible at any time instant. Thus, advertisements by a single entity are spread over time in real world situations. \emph{This motivates the formulation of the influence maximization problem as an optimal control problem rather than just a seed optimization problem.}

\emph{Justification for Using SI as the Spread Model:} The similarity between biological epidemics and information dissemination in a network motivates us to use the SI model. In the uncontrolled system, the message is passed to susceptible neighbors of an infected/informed node at some rate due to posts made by the infected/informed node in its social network timeline, similar to the spread of a disease.

The SI model is suitable for information epidemics when spreaders do not recover (stop spreading) during the duration of the campaign. This is possible for campaigns with small deadlines (\emph{e.g.}, charity campaigns, movie promotion after its release). Advertising individual matches in a long tournament (\emph{e.g.}, Football World Cup) where deadlines are short because gaps between the matches are small, and interest of the population is high, is another example of this situation. If the campaign generates lot of interest in the population, the SI process is again a suitable model for information diffusion (\emph{e.g.}, election/political campaigns).

Prior works have used the SI process to model information dissemination in a population. For example, information dissemination in a social call graph---where two individuals are connected if they spoke over the mobile network---was modeled as an SI process in \cite{onnela2007structure}. Information dissemination in social and technological networks was modeled as an SI process in \cite{khelil2002epidemic,busch2012homogeneous}. Information diffusion due to face-to-face interactions was modeled as an SI process in \cite{isella2011s}.

We emphasize that the \emph{framework developed in this paper is not limited to the SI model} and can be easily extended to other ordinary differential equation based epidemic models on networks with given adjacency matrix, such as, susceptible-infected-susceptible/recovered (SIS/R) and Maki-Thompson rumor models.

\subsection{Related Work and Our Contributions}

We have already discussed the benchmark studies on influence maximization that address the optimal seed selection problem. More recent works have generalized the seed selection algorithm for more general networks and/or competing opinions. For example, networks with signed edges (friend/foe) were considered in \cite{li2013influence}, temporal networks were considered in \cite{michalski2014seed} and competing opinions (positive/negative) were considered in \cite{liu2010influence, zhang2013maximizing}.

All of the above works identified good seeds to maximize informed/influenced individuals in the network. After the seed was selected, the system evolved in an uncontrolled manner. However, as discussed earlier, in real world situations we see advertisements and promos appearing throughout the campaign duration. \emph{Our formulation not only identifies good seeds, but also shapes system evolution during the campaign horizon through controls.} It is more general and more practical with respect to the above works.

Previous works such as \cite{kandhway2014run, karnik2012optimal, kandhway2014optimal} maximized information epidemics under different diffusion models by formulating optimal control problems; \emph{however, all of them assumed no underlying network structure.} Homogeneous mixing---in which any two individuals in the system are equally likely to meet and interact for the purpose of information diffusion---is not a plausible assumption for individuals connected via social networks. In addition, \emph{seed optimization was not carried out} because homogeneous mixing was assumed.

Works such as \cite{asano2008optimal, zhu2012optimal, khouzani2011optimal} \emph{discussed prevention of biological and computer virus epidemics} using optimal control strategies. Again \emph{homogeneous mixing was assumed.} \cite{youssef2013mitigation} formulated a biological epidemic mitigation problem for a networked population, but presented optimal results for only a five node network (heuristics were proposed for larger but synthetic networks). \emph{The effect of different centrality measures was not studied} in that work. Also, we present and interpret results for \emph{real social networks}. Similarly, \cite{dayama2012optimal} devised product marketing strategies for a synthetic network with only two types of individuals---ones with high degrees and low degrees, which is only a \emph{crude approximation of real social networks.}

In the following we list the \emph{\textbf{main contributions of this paper}}. \emph{\textbf{First}}, we formulate a joint seed--optimal control problem to allocate seeds and time varying resource to groups of nodes in the network. This is a dynamic optimization framework rather than a static one. The objective function is a linear combination of the final fraction of informed nodes and the aggregate cost of application of controls. We group nodes based on their centrality values in the network, which allows us to compare the performance of various centrality measures in maximizing information spread in the framework of optimal control. However, we emphasize that grouping can be carried out using other methods as well. We also study a problem variant which has a fixed budget constraint (with given seed). \emph{\textbf{Second}}, we prove some simple structural results for the optimal control signal analytically. \emph{\textbf{Third}}, we present numerical schemes that use the forward-backward sweep technique (generalized for a network setting) to solve the optimality systems. We modify this scheme to solve the problem with a budget constraint and for joint optimization. These schemes are capable of handling large optimality systems and allow us to present and interpret results for real social networks---scientific collaboration, Slashdot and Facebook. We also study the effect of system parameters on the optimal strategy and the objective function. These schemes can be easily adapted to study other differential equations based spread models on networks with given adjacency matrix (\emph{e.g.}, SIS/R, SIRS, Maki-Thompson etc.).

We believe that our framework and formulation provide novel insights into the problem of maximizing influence through optimal controls. It allows us to \emph{\textbf{answer the following questions that were overlooked in the previous literature:}} (1) Depending on scarcity/abundance of resource, whom to target---central nodes who are better spreaders, or disadvantaged non-central nodes who would not receive the message through epidemic spreading? How is this resource allocated over time? (2) How does the simple and local centrality measure---node degree---perform compared to more complicated ones like betweenness, closeness  or pagerank? (3) How much advantage do optimal strategies---which heterogeneously allocate resource over time and node groups---achieve over simple heuristic strategies which target everyone equally at the same rate throughout the campaign duration on real social networks?

\section{System Model and Problem Formulation}
\label{sec:sys_model_prob_formaulation}

We consider an undirected and non-weighted static network, $(\mathbb N,\boldsymbol A)$. The set of all nodes (vertices) in the network (graph) is represented by $\mathbb N=\{ 1,2,...,N \}$. The adjacency matrix of the network is given by $\boldsymbol A=\{A_{ij}\}$. Thus, $A_{ij}=1$ if nodes $i$ and $j$ are connected and $0$ otherwise. Also, in this work we consider an undirected network, \emph{i.e.}, $A_{ij}=A_{ji}$. Let the probability of node $j\in \mathbb N$ being in infected (informed) state at time $t$ be $i_j(t)$. Then, $i_j(t)=1-s_j(t)$, where $s_j(t)$ is the probability of the node being in susceptible state at time $t$.

We briefly describe the uncontrolled SI epidemic on a network with known adjacency matrix \cite[Sec. 17.10]{newman2009networks}, and then adapt it to include controls to accelerate information spreading. If the information spreading rate is $\beta$, the message transfers from a single infected neighbor $k$ to a susceptible node $j$, with probability $\beta dt$ in a small interval $dt$. It is required that $j$ be susceptible to begin with, which happens with probability $s_j(t)$, and the neighbor be infected, which happens with probability $A_{jk}i_k(t)$ (node $k$ is a neighbor only if there is a link between $j$ and $k$, \emph{i.e.}, $A_{jk}=1$). Aggregating over all neighbors, the evolution of the probability $i_j(t)$ in an SI epidemic in a network with adjacency matrix $\boldsymbol A$ is given by \cite[Sec. 17.10]{newman2009networks}:
\begin{align}
\dot i_j(t) & =  \beta s_j(t) \sum_{k=1}^N (A_{jk}i_k(t));~i_j(0)=x_{0j};~~\forall j\in \mathbb N, \label{eq:uncontrolled_SI}
\end{align}
where, $x_{0j}$ is the probability that node $j$ is selected as seed.

\emph{Controlled System:} The shape of the control function decides resource allocation over time $[0,T]$, identifying important and non-important periods of the campaign. In addition, to differentiate allocation of resources among nodes of different `types', we aggregate them into groups. One control influences each group. The whole network is divided into $M$ non-overlapping groups, $\mathbb N_m,~1\leq m\leq M$; that is, $\mathbb N=\cup_{m=1}^M \mathbb N_m$ and $\mathbb N_p\cap\mathbb N_q=\emptyset$ for $p\neq q$. Each node belongs to one and only one of the groups, $\mathbb N_m,~1\leq m\leq M$, and the sum of the number of nodes in all the groups, $\sum_{m=1}^M|\mathbb N_m| = N$. We use notions of node centrality listed in Sec. \ref{sec:centralities} for dividing the network into $M$ groups.

Only non-random grouping is useful, that is, nodes of similar `types' should be grouped together. In this work, we differentiate node types based on their centrality value. In the following, we illustrate how division of $N$ nodes can be carried out in $M$ groups using, for example, degree centrality. Suppose we decide to keep all the groups of the same size. We calculate the degree centralities of all nodes in the network, and then place the bottom $N/M$ nodes with least values of the centrality in group $1$, $\mathbb N_1$; next $N/M$ nodes in groups $2$, $\mathbb N_2$; and so on. If the centrality values of two or more nodes are tied and there is insufficient space in group $\mathbb N_i$, a number of nodes from the tied set are randomly chosen and placed in group $\mathbb N_i$, and the rest are placed in group $\mathbb N_{i+1}$.

We emphasize that the above method is just an example; the sizes of groups need not be equal in our formulation. Also, metrics other than centrality measures may be used for the division of the groups (\emph{e.g.}, community structure in the network). However, to present results in this paper (Sec. \ref{sec:results}), we have followed the method mentioned above to group the nodes in the network.

Denote by $p_m$ the fraction of nodes in group $m$, \emph{i.e.}, $p_m\triangleq |\mathbb N_m|/N$. The control function $u_m(t)$ aids information diffusion in the group $\mathbb N_m$ at time $t$. It represents the rate at which advertisements are placed in the social network timeline of the nodes in group $\mathbb N_m$.

We state the joint seed--control problem in the following and discuss the formulation:
\begin{subequations}
\label{eq:opt_prob}
\begin{align}
\underset{ \begin{smallmatrix} (u_1(t),...,u_M(t))^T,\\ \big\{(i_{01},...,i_{0M})^T: \\
0\leq i_{0m}\leq 1,~\forall m;\\ \sum_{m=1}^Mp_mi_{0m}=i_0\big\} \end{smallmatrix} }{\text{max}} J & = \frac{1}{N}\sum_{j=1}^N i_j(T) - \sum_{m=1}^M \int_0^Tg_m(u_m(t))dt, \label{eq:cost_funtion}\\
\text{s.t.:~~} \dot i_j(t) & =  \beta s_j(t) \sum_{k=1}^N (A_{jk}i_k(t)) + \nonumber \\
& \sum_{m=1}^M ( \mathbbm 1_{\{j\in \mathbb N_m\}} u_m(t)) s_j(t) ;~ 1 \leq j \leq N, \label{eq:opt_prob_states} \\
i_j(0) & =  \sum_{m=1}^M \mathbbm 1_{\{j\in \mathbb N_m\}} i_{0m}; ~~1 \leq j \leq N. \label{eq:opt_prob_init_cond} \end{align}
\end{subequations}

We have considered the net reward function (\ref{eq:cost_funtion}) to be a weighted combination of the reward due to extent of information dissemination (captured by $\frac{1}{N}\sum_{j=1}^N i_j(T)$) and the aggregate cost due to expenditure caused by application of controls over the decision horizon (captured by $\sum_{m=1}^M \int_0^Tg_m(u_m(t))dt$). Constant weights are subsumed in $g_m(.)$'s. In a social network, it is impractical to assume that we know the states (susceptible or infected) of the nodes a-priori. So, the advertisement is shown to both susceptible and infected nodes; however, it has useful effect on only susceptible nodes. Thus, the cost of application of control is only a function of $u_m(t)$ (and not fraction of susceptible or infected nodes in class $m$).

The control transfers nodes from susceptible to infected class. A node $j\in\mathbb N_m$ is shown the advertisement at the rate $u_m(t)$ at time $t$. In other words, in an interval $dt$ at time $t$, the probability of showing the advertisement to a node $j\in\mathbb N_m$ is $u_m(t)dt$. Since we do not know the state (susceptible or infected) of the node beforehand in an online social network, and the advertisement is shown to both categories of nodes; the advertisement has useful effect (of infecting the node) only if the node is susceptible, which happens with probability $s_j(t)$. Thus, in an interval $dt$ at time $t$, the additional increment in the probability of node $j\in\mathbb N_m$ being in infected class is $u_m(t)s_j(t)dt$. This additional increment is the second term on the right hand side of Eq. (\ref{eq:opt_prob_states}). Note that $\mathbbm 1_{\{j\in \mathbb N_m\}}=1$ when node $j$ is in group $m$ and $0$ otherwise.\footnote{The uncontrolled and controlled SI models used in Eqs. (\ref{eq:uncontrolled_SI}) and (\ref{eq:opt_prob_states}) do not consider spontaneous conversion of susceptible nodes to infected state without any external control or peer influence, \emph{i.e.}, information spreads only through the campaigner's advertisements and neighbors. Models used in previous works, \emph{e.g.}, \cite{kemp2003maximizing,kimura2007extracting,chen2009efficient,goyal2011celf} (linear-threshold and/or independent cascade), \cite{saito2012efficient} (SIS), and many others make a similar assumption where nodes are influenced only due to their peers in the network. However, spontaneous conversion, due to sources external to the system, may be handled in our formulation by adding a constant or state dependent term to the RHS of Eqs. (\ref{eq:uncontrolled_SI}) and (\ref{eq:opt_prob_states}).}

In (\ref{eq:opt_prob_init_cond}), $i_{0m}$ with $0\leq i_{0m}\leq 1$, represents the fraction of nodes selected as seeds in group $\mathbb N_m$. In other words, each node is selected as seed with probability $i_{0m}$ in group $\mathbb N_m$. Thus, for a node $j$ belonging to group $\mathbb N_m$, the initial condition is as represented in Eq. (\ref{eq:opt_prob_init_cond}). Also, the constant $i_0$ in the constraint $\sum_{m=1}^Mp_mi_{0m}=i_0$ represents the `seed budget' (fraction of total population that can be selected as seed at $t=0$).

We assume the cost functions $g_m(.)$'s to be strictly convex and increasing functions in their arguments $u_m(t)\geq 0$, with $g_m(0)=0$ (stronger control leads to greater cost). Costs are convex functions in many economic applications \cite{newman1987convexity}. For our scenario where we attempt to maximize the spread of useful information, the controls $u_m(t)$'s are never negative. Instead of explicitly incorporating this constraint in Problem (\ref{eq:opt_prob}), we ensure it by assuming $g_m(.)$'s to be even functions (by taking an even extension of the cost function defined for $u_m(t)\geq 0$). If we do so, negative values of controls incur positive costs but reduce the reward function (\ref{eq:cost_funtion}) by reducing the fraction of infected nodes in the population. Instead, if $|u_m(t)|$ is applied, the reward increases and expenditure is the same. Thus, for any $t$, the control is never negative. No constraint on the control function $u_m$'s makes the theoretical analysis in Sec. \ref{sec:analysis_solution_structure} simple. Also, this assumption ensures that we get a correct solution from the numerical schemes (discussed later in Sec. \ref{sec:num_scheme}).

We have presented the joint seed and time varying resource allocation optimization framework for the SI model in this paper. However, the framework can be easily extended to other differential equation based spread models such as susceptible-infected-susceptible/recovered (SIS/R), Maki-Thompson model and other derivatives like SIRS etc. To do this, one should replace Eq. (\ref{eq:opt_prob_states}) with the differential equations corresponding to the new model.

\section{Analysis and Solution of Problem (\ref{eq:opt_prob})}
\label{sec:analysis_solution}

We first set up the optimality system by assuming the seeds are given (known) in Sec. \ref{sec:soln_by_pontryagin}, and present an algorithm to compute the controls corresponding to this system numerically in Sec. \ref{sec:num_scheme_fixed_seed}. We will use this algorithm in the numerical scheme in Sec. \ref{sec:num_scheme_joint} to jointly optimize the seed and dynamic resource allocation.

\subsection{Pontryagin's Maximum Principle}
\label{sec:soln_by_pontryagin}

We get the following optimality system for Problem (\ref{eq:opt_prob}) when seeds are given (details of Pontryagin's Principle can be found in \cite{kirk2012optimal}):

\emph{Hamiltonian:} $H(\boldsymbol i(t),\boldsymbol \lambda(t),\boldsymbol u(t))  = - \sum\limits_{m=1}^M g_m(u_m(t)) + \sum\limits_{l=1}^N\lambda_l(t)\big( \beta s_l(t) \sum_{k=1}^N (A_{lk}i_k(t)) \hspace{-.1em}+\hspace{-.4em} \sum\limits_{m=1}^M ( \mathbbm 1_{\{l\in \mathbb N_m\}} u_m(t)) s_l(t) \big).$
\\ \emph{State equations:} Eqs. (\ref{eq:opt_prob_states}) with $i_j(t)$ and $s_j(t)$ replaced by $i_j^*(t)$ and $s_j^*(t)$ for all $j\in\mathbb N$. The initial condition is given by Eq. (\ref{eq:opt_prob_init_cond}) with $i_j(0)$ replaced by $i_j^*(0)$ for all $j\in\mathbb N$.
\\ \emph{Adjoint equations:}
\begin{align}
\dot{\lambda}_j^*(t)=-\frac{\partial H}{\partial i_j(t)} = -\beta\sum_{l=1}^N\lambda_l^*(t) ( s_l^*(t)A_{lj} ) + \nonumber \\
\beta\lambda_j^*(t)\sum_{k=1}^N(A_{jk}i_k^*(t)) + \lambda_j^*(t) \sum_{m=1}^M ( \mathbbm 1_{\{j\in \mathbb N_m\}} u_m^*(t)). \label{eq:adjoint_eq}
\end{align}
\\ \emph{Hamiltonian maximizing condition:}
\begin{align}
\boldsymbol u^*(t)=\text{argmax}\{ H(\boldsymbol i^*(t),\boldsymbol \lambda^*(t),\boldsymbol u^*(t)) \} \nonumber \\
\Rightarrow \frac{\partial H}{\partial u_m(t)} = -g'_m(u_m^*(t)) + \sum_{l\in\mathbb N_m}\lambda_l^*(t) s_l^*(t) = 0 \nonumber \\
\Rightarrow g'_m(u_m^*(t)) = \sum_{l\in\mathbb N_m}\lambda_l^*(t) s_l^*(t),~1\leq m \leq M, \label{eq:hamil_max_cond_control_val} \\
\Rightarrow u_m^*(t) = g_m'^{-1}\Big(\sum_{l\in\mathbb N_m}\lambda_l^*(t) s_l^*(t)\Big),~1\leq m \leq M. \label{eq:hamil_max_cond_control_val1}
\end{align}
\\ \emph{Transversality condition:}
\begin{align}
\lambda_j^*(T)=1/N,~1\leq j\leq N. \label{eq:transversaility_cond}
\end{align}

\subsection{Structural Properties of Controls}
\label{sec:analysis_solution_structure}

For the special case of $M=N$, \emph{i.e.}, each node has an individual control, it is possible to show that the controls are non-increasing functions of time (Theorem \ref{thm:control_noninc}). Further, for quadratic cost functions, the controls are convex in nature (Theorem \ref{thm:control_convex}). We first state following lemma which will be used in subsequent theorems.

\begin{lemma}
\label{thm:lambda_nonnegative}
For $M=N$, at the optimal solution, the adjoint variables $\lambda_j^*(t)\geq 0,~1\leq j\leq M,~t\in[0,T]$.
\end{lemma}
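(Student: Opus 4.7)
The plan is to regard the adjoint equations (\ref{eq:adjoint_eq}) as a linear system driven backwards from the positive terminal condition (\ref{eq:transversaility_cond}) and to exploit the fact that the cross-coupling between the coordinates $\lambda_1^*,\dots,\lambda_N^*$ occurs through non-negative coefficients. Since $\lambda_j^*(T)=1/N>0$ for every $j$, it suffices to show that this non-negativity is preserved as one integrates backward from $T$ to $0$.

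First I would specialise (\ref{eq:adjoint_eq}) to $M=N$: each group is a singleton, so $\sum_{m=1}^M\mathbbm{1}_{\{j\in\mathbb N_m\}}u_m^*(t)$ collapses to $u_j^*(t)$, which is already known to be non-negative from the even-extension argument in Section \ref{sec:sys_model_prob_formaulation}. I would then reverse time by setting $\tau=T-t$, $\mu_j(\tau)=\lambda_j^*(T-\tau)$, and use the symmetry $A_{lj}=A_{jl}$ to rewrite the adjoint system as the initial-value problem
\begin{align*}
\dot\mu_j(\tau) &= \beta\sum_{l=1}^N A_{jl}\,s_l^*(T-\tau)\,\mu_l(\tau) \\
&\quad -\Bigl[\beta\sum_{k=1}^N A_{jk}\,i_k^*(T-\tau)+u_j^*(T-\tau)\Bigr]\mu_j(\tau),
\end{align*}
with $\mu_j(0)=1/N>0$. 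This is of the form $\dot{\boldsymbol\mu}=B(\tau)\boldsymbol\mu$ whose off-diagonal entries $B_{jl}(\tau)=\beta A_{jl}s_l^*(T-\tau)$ are non-negative (since $A_{jl}\in\{0,1\}$ and $s_l^*\in[0,1]$); in other words, $B(\tau)$ is Metzler on $[0,T]$. By the classical positive-invariance theorem for linear Metzler systems, $\mathbb R_{\geq 0}^N$ is forward invariant, so $\mu_j(\tau)\geq 0$, equivalently $\lambda_j^*(t)\geq 0$, for all $j$ and $t\in[0,T]$.

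The main subtlety is the boundary behaviour on $\partial\mathbb R_{\geq 0}^N$. A naive first-time-of-negativity contradiction — define $t^*=\sup\{t:\lambda_j^*(t)<0\text{ for some }j\}$ and compute $\dot\lambda_{j_0}^*(t^*)=-\beta\sum_l A_{j_0 l}s_l^*(t^*)\lambda_l^*(t^*)\leq 0$ — handles the generic case but degenerates when that derivative vanishes, since it then forces the adjoints of every neighbour of $j_0$ to vanish simultaneously and requires a graph-propagation argument relying on connectedness. I would instead keep the proof self-contained by choosing $\kappa>0$ large enough that $B(\tau)+\kappa I$ is entrywise non-negative on the compact interval $[0,T]$ (possible because $s_l^*,i_k^*,u_j^*$ are bounded there), substituting $\boldsymbol\nu(\tau)=e^{\kappa\tau}\boldsymbol\mu(\tau)$ so that $\dot{\boldsymbol\nu}=(B(\tau)+\kappa I)\boldsymbol\nu$ with $\boldsymbol\nu(0)\geq 0$, and then running Picard iteration from $\boldsymbol\nu^{(0)}(\tau)\equiv\boldsymbol\nu(0)$: each iterate is the integral of a non-negative matrix against a non-negative vector and hence entrywise non-negative, so the limit satisfies $\boldsymbol\nu(\tau)\geq 0$, which in turn yields $\boldsymbol\mu(\tau)\geq 0$ and completes the proof.
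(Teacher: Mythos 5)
Your proof is correct, but it takes a genuinely different route from the paper's. The paper never touches the adjoint ODE: it compares the Hamiltonian at the optimum with the Hamiltonian obtained by replacing the single coordinate $u_j^*(t)$ by $0$, which by the maximization condition gives $-g_j(u_j^*(t))+\lambda_j^*(t)s_j^*(t)u_j^*(t)\geq -g_j(0)=0$, hence $\lambda_j^*(t)s_j^*(t)u_j^*(t)\geq g_j(u_j^*(t))\geq 0$ and therefore $\lambda_j^*(t)\geq 0$ --- a two-line static argument exploiting only $g_j(0)=0$, $g_j\geq 0$, $s_j^*\geq 0$ and $u_j^*\geq 0$. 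You instead integrate the adjoint system (\ref{eq:adjoint_eq}) backward from the positive transversality condition (\ref{eq:transversaility_cond}), observe that the time-reversed system is linear with Metzler coefficient matrix (the only cross-coupling is through $\beta A_{jl}s_l^*\geq 0$), and prove positive invariance of the non-negative orthant via the $e^{\kappa\tau}$ shift and Picard iteration. The paper's argument is far shorter, but its last implication divides by $s_j^*(t)u_j^*(t)$ and so is silent (or at least informal) at instants where that product vanishes; your dynamical argument has no such degeneracy, and in fact needs only boundedness of $u_j^*$ on $[0,T]$ rather than its non-negativity, so it is the more robust of the two. The price is length and the routine verification of the Metzler invariance machinery, which your exponential-shift construction handles cleanly.
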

\begin{proof}
From the Hamiltonian maximizing condition of Pontryagin's Principle,
\begin{align}
H(\boldsymbol i^*(t),\boldsymbol \lambda^*(t),u_1^*(t),...,u_j^*(t),...,u_M^*(t)) \nonumber \\
\geq H(\boldsymbol i^*(t),\boldsymbol \lambda^*(t),u_1^*(t),...,0,...,u_M^*(t)). \nonumber
\end{align}
This leads to,
\begin{align}
-g_j(u_j^*(t))+\lambda_j^*(t)s_j^*(t)u_j^*(t) \geq -g_j(0) \nonumber \\
\Rightarrow \lambda_j^*(t)s_j^*(t)u_j^*(t)\geq 0,~(\because g_j(0)=0,~g_j(u_j^*(t))\geq 0), \nonumber \\
\Rightarrow \lambda_j^*(t) \geq 0. ~(\because 0\leq s_j^*(t)\leq 1,~u_j^*(t)\geq 0). \nonumber
\end{align}
\end{proof}

\begin{thm}
\label{thm:control_noninc}
For $M=N$, the optimal controls $u_j^*(t)$ are non-increasing in $t$, for $1\leq j\leq M$.
\end{thm}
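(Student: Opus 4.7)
The plan is to reduce the monotonicity of $u_j^*(t)$ to the monotonicity of the product $\lambda_j^*(t)\,s_j^*(t)$ via the Hamiltonian maximizing condition, and then show the derivative of this product is non-positive by exploiting a cancellation between the state and adjoint dynamics.

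First I would note that for $M=N$ the condition (\ref{eq:hamil_max_cond_control_val1}) reduces to $u_j^*(t) = (g_j')^{-1}\bigl(\lambda_j^*(t)\,s_j^*(t)\bigr)$. Because each $g_j$ is strictly convex and increasing with $g_j(0)=0$, the derivative $g_j'$ is strictly increasing, and hence $(g_j')^{-1}$ is strictly increasing on its domain. Therefore it is enough to prove that $f_j(t)\triangleq \lambda_j^*(t)\,s_j^*(t)$ is non-increasing on $[0,T]$.

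Next I would differentiate $f_j$ using the product rule. The state equation (\ref{eq:opt_prob_states}) (written for $s_j^*=1-i_j^*$ and specialized to $M=N$) gives
\begin{align*}
\dot s_j^*(t) = -s_j^*(t)\Bigl(\beta\sum_{k=1}^N A_{jk} i_k^*(t) + u_j^*(t)\Bigr),
\end{align*}
while the adjoint equation (\ref{eq:adjoint_eq}) with $M=N$ gives
\begin{align*}
\dot\lambda_j^*(t) = -\beta\sum_{l=1}^N \lambda_l^*(t)\,s_l^*(t)\,A_{lj} + \lambda_j^*(t)\Bigl(\beta\sum_{k=1}^N A_{jk} i_k^*(t) + u_j^*(t)\Bigr).
\end{align*}
Multiplying the first identity by $\lambda_j^*$, the second by $s_j^*$, and adding, the two parenthesized expressions multiply $\lambda_j^* s_j^*$ with opposite signs and cancel, leaving the clean identity
\begin{align*}
\dot f_j(t) = -\beta\,s_j^*(t)\sum_{l=1}^N \lambda_l^*(t)\,s_l^*(t)\,A_{lj}.
\end{align*}

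Finally I would invoke non-negativity: $\beta>0$, $A_{lj}\geq 0$, $s_l^*(t)\in[0,1]$, and by Lemma \ref{thm:lambda_nonnegative} each $\lambda_l^*(t)\geq 0$. Therefore $\dot f_j(t)\leq 0$ for all $t\in[0,T]$, so $f_j$ is non-increasing, and composing with the increasing map $(g_j')^{-1}$ gives that $u_j^*(t)$ is non-increasing in $t$, as claimed.

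The only real obstacle is the bookkeeping in the product-rule step, where the crucial observation is that the endogenous ``infection-pressure plus control'' factor $\beta\sum_k A_{jk}i_k^*+u_j^*$ appears symmetrically in $\dot s_j^*/s_j^*$ and in the positive part of $\dot\lambda_j^*/\lambda_j^*$, so it cancels exactly; once this cancellation is noticed the argument is immediate.
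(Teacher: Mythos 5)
Your proof is correct and follows essentially the same route as the paper: both differentiate the Hamiltonian maximizing condition (equivalently, the product $\lambda_j^*(t)s_j^*(t)$), exploit the exact cancellation of the $\beta\sum_k A_{jk}i_k^*+u_j^*$ terms between the state and adjoint dynamics to obtain $\frac{d}{dt}\bigl(\lambda_j^*s_j^*\bigr)=-\beta s_j^*\sum_l \lambda_l^*s_l^*A_{lj}$, and then invoke Lemma \ref{thm:lambda_nonnegative} for the sign. Your phrasing via the monotonicity of $(g_j')^{-1}$ is a minor (and slightly cleaner) repackaging of the paper's step $g_j''(u_j^*)\dot u_j^*\leq 0$, but the argument is the same.
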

\begin{proof}
Differentiating (\ref {eq:hamil_max_cond_control_val}) with respect to $t$, for $M=N$, we get,
\begin{align}
g''_j(u_j^*(t))\dot u_j^*(t)=\dot \lambda_j^*(t)s_j^*(t)+\lambda_j^*(t)\dot s_j^*(t). \nonumber
\end{align}
After simplification, this leads to (note that $\dot s^*(t)=-\dot i^*(t)$),
\begin{align}
g''_j(u_j^*(t))\dot u_j^*(t)=-\beta s_j^*(t) \sum_{l=1}^N(\lambda_l^*(t)s_l^*(t)A_{lj}). \label{eq:gpp_udot}
\end{align}
Since $g_j(.)$'s are convex so $g''_j(u_j^*(t))\geq 0$ and $\lambda_l^*(t)\geq 0$ (Lemma \ref{thm:lambda_nonnegative}), $s_j^*(t)$ being probability $\geq 0$, so we conclude that $\dot u^*(t)\leq 0$.
\end{proof}

\begin{thm}
\label{thm:control_convex}
For $M=N$ and $g_j(u_j(t))=c_ju_j^2(t),~c_j>0$ for $1\leq j\leq M$, the optimal controls $u_j^*(t)$ are convex functions of $t$, for all $1\leq j\leq M$.
\end{thm}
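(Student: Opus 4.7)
The plan is to show that $\ddot u_j^*(t) \geq 0$ for each $j$, by differentiating once more the identity (\ref{eq:gpp_udot}) obtained in the proof of Theorem \ref{thm:control_noninc} and then re-using the Hamiltonian maximizing condition to kill the messiest terms.

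First, I would specialize (\ref{eq:gpp_udot}) to the quadratic cost: since $g_j(u_j) = c_j u_j^2$ gives $g''_j(u_j^*(t)) = 2c_j$, a constant, Eq. (\ref{eq:gpp_udot}) becomes
\begin{equation*}
2 c_j \dot u_j^*(t) = -\beta\, s_j^*(t) \sum_{l=1}^N A_{lj}\, \lambda_l^*(t)\, s_l^*(t).
\end{equation*}
Differentiating both sides with respect to $t$ (the left side has constant coefficient, so no product rule is needed there),
\begin{equation*}
2 c_j \ddot u_j^*(t) = -\beta\, \dot s_j^*(t) \sum_{l=1}^N A_{lj}\, \lambda_l^*(t)\, s_l^*(t)
\;-\; \beta\, s_j^*(t) \sum_{l=1}^N A_{lj}\, \tfrac{d}{dt}\!\bigl(\lambda_l^*(t) s_l^*(t)\bigr).
\end{equation*}

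Next, the crucial trick: from the Hamiltonian maximizing condition (\ref{eq:hamil_max_cond_control_val}), for $M=N$ and quadratic cost, $2 c_l u_l^*(t) = \lambda_l^*(t) s_l^*(t)$ holds \emph{for every} $l$. Hence $\tfrac{d}{dt}(\lambda_l^*(t) s_l^*(t)) = 2 c_l\, \dot u_l^*(t)$, which Theorem \ref{thm:control_noninc} already tells us is non-positive. Substituting,
\begin{equation*}
2 c_j \ddot u_j^*(t) = -\beta\, \dot s_j^*(t) \sum_{l=1}^N A_{lj}\, \lambda_l^*(t)\, s_l^*(t)
\;-\; 2\beta\, s_j^*(t) \sum_{l=1}^N A_{lj}\, c_l\, \dot u_l^*(t).
\end{equation*}

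Finally, I would verify the sign of each summand. Observe that $\dot s_j^*(t) = -\dot i_j^*(t) \leq 0$, because the right-hand side of the state equation (\ref{eq:opt_prob_states}) is non-negative (all of $\beta$, $s_j^*$, $i_k^*$, $A_{jk}$ and $u_m^*$ are non-negative). Together with Lemma \ref{thm:lambda_nonnegative} giving $\lambda_l^*(t) \geq 0$ and $s_l^*(t) \geq 0$, the first term is the product of $-\beta \dot s_j^*(t) \geq 0$ with a non-negative sum, hence non-negative. The second term is the product of the non-positive factor $-2\beta s_j^*(t) \leq 0$ with $\sum_l A_{lj} c_l \dot u_l^*(t) \leq 0$ (again by Theorem \ref{thm:control_noninc}, with $A_{lj}, c_l \geq 0$), hence also non-negative. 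Therefore $\ddot u_j^*(t) \geq 0$, which is convexity.

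The only step that is not wholly routine is recognizing that the messy derivative $\tfrac{d}{dt}(\lambda_l^* s_l^*)$ can be collapsed to $2 c_l \dot u_l^*$ via the Hamiltonian maximizing condition; trying to expand it instead through the adjoint equation (\ref{eq:adjoint_eq}) and the state equation leads to a tangle of sign-ambiguous terms. That substitution, combined with Theorem \ref{thm:control_noninc}, is what makes the proof go through cleanly; everything else is sign bookkeeping.
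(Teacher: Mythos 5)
Your proof is correct and is essentially the paper's own argument: the paper substitutes $g_l'(u_l^*(t))=\lambda_l^*(t)s_l^*(t)$ into (\ref{eq:gpp_udot}) \emph{before} differentiating, whereas you differentiate first and then collapse $\tfrac{d}{dt}(\lambda_l^*s_l^*)$ to $2c_l\dot u_l^*$, but the two resulting expressions for $2c_j\ddot u_j^*(t)$ are term-for-term identical. The sign bookkeeping (via Lemma \ref{thm:lambda_nonnegative}, Theorem \ref{thm:control_noninc}, and $\dot s_j^*\leq 0$) also matches the paper's.
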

\begin{proof}
Using Eq. (\ref{eq:hamil_max_cond_control_val}) (for $M=N$) in Eq. (\ref{eq:gpp_udot}),
\begin{align}
2c_j\dot u_j^*(t) = & -\beta s_j^*(t) \sum_{l=1}^N(g_l'(u_l^*(t))A_{lj}) \nonumber \\
\Rightarrow 2c_j\overset{..}{u}_j^*(t) = & -\beta s_j^*(t) \sum_{l=1}^N(g_l''(u_l^*(t))\dot u_l^*(t)A_{lj}) \nonumber \\
& - \beta \dot s_j^*(t) \sum_{l=1}^N(g_l'(u_l^*(t))A_{lj}). \nonumber
\end{align}
Since $\dot u_l^*(t)\leq 0$ (Theorem \ref{thm:control_noninc}), $g_j''(.)>0$ ($\because~g_j$'s are quadratic and $c_j>0$), $s_j^*(t)\geq 0$ (probability), $g_j'(.)>0$ (assumption of stronger control leading to greater cost), $\dot s_j^*(t)\leq 0$ (from Eq. (\ref{eq:opt_prob_states}), note that $\dot s_j^*(t)=-\dot i_j^*(t)$), so we can conclude that $\overset{..}{u}_j^*(t)\geq 0$, which proves the convexity of $u_j^*(t)$ in $t$.
\end{proof}

\subsection{Numerical Solution Using Forward-Backward Sweep}
\label{sec:num_scheme}

Sec. \ref{sec:num_scheme_fixed_seed} presents the numerical scheme to solve the optimality system in Sec. \ref{sec:soln_by_pontryagin} (when seeds are given). We use this to solve the joint seed--control problem numerically in Sec. \ref{sec:num_scheme_joint}.

\subsubsection{Fixed Seed}
\label{sec:num_scheme_fixed_seed}

Let the operator $\star$ perform element wise multiplication of two vectors of the same dimension, \emph{i.e.}, $(...,a_i,...)^T\star (...,b_i,...)^T=(...,a_ib_i,...)^T$.  Let the state, adjoint, and control variables be represented by $\boldsymbol i(t)=(i_1(t),...,i_N(t))^T=\boldsymbol 1-\boldsymbol s(t),~\boldsymbol \lambda(t)=(\lambda_1(t),...,\lambda_N(t))^T$, and $\boldsymbol u(t)=(u_1(t),...,u_M(t))^T$, and, the initial condition by $\tilde{\boldsymbol i}_0=(\tilde i_{01},...,\tilde i_{0N})^T$. Here, the $j$th element is the probability that node $j$ is selected as seed, which is a given quantity. Let $\tilde{\boldsymbol u}(t)$ be a column vector of dimension $N$ with the $j$th element being $\sum_{m=1}^M ( \mathbbm 1_{\{j\in \mathbb N_m\}} u_m(t))$. If node $j$ is in class $m$, the control $u_m$ influences it. Then, given $\tilde{\boldsymbol i}_0$, Problem (\ref{eq:opt_prob}) can be solved using the forward backward sweep method \cite{asano2008optimal} in Algorithm \ref{alg:fw_back_sweep}.

\begin{algorithm}
\small
\caption{Forward-backward sweep algorithm for Problem (\ref{eq:opt_prob}) when $\boldsymbol i_0$ is given.}
\label{alg:fw_back_sweep}
\begin{algorithmic}[1]
	\REQUIRE $u_{th}$, $T$, $\beta$, $\boldsymbol i_{0}$, $A$, $M$, $\mathbb N_m$ for $1\leq m\leq M$, $P$.
	\ENSURE The optimal control signals $u_m^*(t)$, $1\leq m\leq M$.
	\STATE Initialize: For all $t\in[0,T]$, $u_m^*(t) \leftarrow 0, 1\leq m\leq M$, and  $\lambda_j^*(t) \leftarrow 0,~\forall j\in\mathbb N$, $iter \leftarrow 0$.
	\REPEAT
		\STATE $iter\leftarrow iter+1.$
		\STATE $\boldsymbol u_{old}\leftarrow \boldsymbol u^*$.
		\STATE Use the state equations (\ref{eq:opt_prob_states}), $\dot{\boldsymbol{i}}^*(t)=\beta \boldsymbol s^*(t)\star (\boldsymbol{Ai}^*(t))$, to calculate $\boldsymbol i^*(t),~\forall t$, with initial condition $\boldsymbol i^*(0) = \tilde{\boldsymbol i}_0$. \COMMENT{Forward sweep.}
		\STATE Calculate $u_m^*,~1\leq m\leq M$ using (\ref{eq:hamil_max_cond_control_val1}).
		\STATE Use the adjoint equations (\ref{eq:adjoint_eq}), $\dot{\boldsymbol{\lambda}}^*(t)=-\beta((\boldsymbol \lambda^*(t)\star \boldsymbol s^*(t))^T \boldsymbol A)^T + \beta \boldsymbol \lambda^*(t) \star (\boldsymbol A \boldsymbol i^*(t)) + \boldsymbol \lambda^*(t) \star \tilde{\boldsymbol u}^*(t)$ to calculate $\boldsymbol \lambda^*(t)$ for all $t$, with terminal conditions $\lambda_j^*(T)=1/N,~j\in\mathbb N$. (transversaility conditions). \COMMENT{Backward sweep.}
		\STATE Calculate $u_m^*,~1\leq m\leq M$ using (\ref{eq:hamil_max_cond_control_val1}).
	\UNTIL{$|| \boldsymbol u^* - \boldsymbol u_{old} ||<u_{th}$ or $iter > P$.}
\end{algorithmic}
\end{algorithm}

Since information diffusion is captured by a system of $N$ differential equations, any \emph{optimal} strategy cannot have lesser computational complexity.  For a network of size $N$, Algorithm \ref{alg:fw_back_sweep} amounts to solving a system of $N$ differential equations $P$ times, in the worst case. Thus, the worst case computational complexity of the optimal strategy is $O(N)$.

\subsubsection{Joint Seed and Resource Allocation} 
\label{sec:num_scheme_joint}

Joint Problem (\ref{eq:opt_prob}) can be expressed as:
\begin{subequations}
\label{eq:opt_prob2}
\begin{align}
\underset{ \begin{smallmatrix}
\big\{(i_{01},...,i_{0M})^T: \\
0\leq i_{0m}\leq 1,~\forall m;\\ \sum_{m=1}^Mp_mi_{0m}=i_0\big\} \end{smallmatrix} }{\text{~~max~~}} ~~(\ref{eq:cost_funtion})~~~\text{subject to:~~} (\ref{eq:opt_prob_states}), (\ref{eq:opt_prob_init_cond}); (\ref{eq:adjoint_eq}), (\ref{eq:hamil_max_cond_control_val1}), (\ref{eq:transversaility_cond}). \nonumber
\end{align}
\end{subequations}
The above problem can be solved by a combination of an optimization routine and Algorithm \ref{alg:fw_back_sweep}. The optimization routine starts from an initial guess for the seed vector $\boldsymbol i_0=(i_{01},...,i_{0M})^T$ and keeps refining the value till the maximum is reached. Note that the $j$th element of $\tilde{\boldsymbol i}_0$ is now given by $\tilde i_{0j} = \sum_{m=1}^M ( \mathbbm 1_{\{j\in \mathbb N_m\}} i_{0m})$. As stated earlier, each node is selected as seed with probability $i_{0m}$ in group $\mathbb N_m$. For any value of the seed variable, the values of controls computed by Algorithm \ref{alg:fw_back_sweep} are such that Pontryagin's Principle is satisfied and hence they are optimum.

Numerical solution to the joint problem requires the gradient of the net reward function with respect to the seed variable vector $\boldsymbol{i_0}$ of length $M$. It is unlikely that the gradient can be computed analytically. A typical numerical optimization routine estimates it by perturbing the variable vector once in every dimension and computing the objective function at the perturbed point. Let the maximum number of optimization iterations be fixed to $L$. Then, in the worst case, Algorithm \ref{alg:fw_back_sweep}---which solves a system of $N$ differential equations $P$ times---has to be used $(M+1)L$ times to get a solution ($M$ evaluations to compute the gradient and one to compute the objective at the new point). That is, $N$ differential equations has to be solved $P(M+1)L$ times which has a complexity of $O(MN)$.

\section{Fixed Budget Constraint}
\label{sec:fixed_budget}

Our optimization framework and numerical scheme are flexible enough to handle other similar problems. In this section we discuss a variant of Problem (\ref{eq:opt_prob}) which maximizes the fraction of infected nodes in the network under a resource budget constraint. For this formulation we consider the seeds to be given and not as optimization variables. The problem is stated as:
\begin{subequations}
\label{eq:opt_prob_budget}
\begin{align}
\underset{ \begin{smallmatrix} (u_1(t),...,u_M(t))^T \end{smallmatrix} }{\text{max}} J & = \frac{1}{N}\sum_{j=1}^N i_j(T), \label{eq:cost_funtion_budget}\\
\text{subject to:~~} & \sum_{m=1}^M \int_0^Tg_m(u_m(t))dt = B, \label{eq:budget_constraint} \\
& (\ref{eq:opt_prob_states}), (\ref{eq:opt_prob_init_cond}). \end{align}
\end{subequations}

We can relax the budget constraint in Problem (\ref{eq:opt_prob_budget}) in the objective function using a multiplier---a technique commonly used in optimization theory---which leads to:
\begin{subequations}
\label{eq:opt_prob_budget_relax}
\begin{align}
\underset{ \begin{smallmatrix} \boldsymbol u \end{smallmatrix} }{\text{max}} ~~J & = \frac{1}{N}\sum_{j=1}^N i_j(T) - \mu \bigg( \sum_{m=1}^M \int_0^Tg_m(u_m(t))dt - B \bigg), \label{eq:cost_funtion_budget_relax}\\
\text{s.t.:~~} & (\ref{eq:opt_prob_states}), (\ref{eq:opt_prob_init_cond}). \end{align}
\end{subequations}
For the value of the multiplier which leads to Eq. (\ref{eq:budget_constraint}) being satisfied, Problem (\ref{eq:opt_prob_budget_relax}) solves Problem (\ref{eq:opt_prob_budget}). Also, given $\mu$, $\mu B$ is just a constant and can be removed from (\ref{eq:cost_funtion_budget_relax}).

Pontryagin's Principle applied to Problem (\ref{eq:opt_prob_budget_relax}) again leads to Eq. (\ref{eq:opt_prob_states}) as state equations, Eq. (\ref{eq:adjoint_eq}) as adjoint equations and Eq. (\ref{eq:transversaility_cond}) as transversality conditions. Only the Hamiltonian maximizing condition changes to:
\begin{equation}
u_m(t) = g_m'^{-1}\Big(\frac{1}{\mu^*}\sum_{l\in\mathbb N_m}\lambda_l^*(t) s_l^*(t)\Big),~1\leq m \leq M, \label{eq:hamil_max_cond_control_val_budget}
\end{equation}
where, $\mu^*$ is the value of multiplier at the optimum (which leads to Eq. (\ref{eq:budget_constraint}) being satisfied).

The above problem can be easily solved numerically by narrowing down to the value of the multiplier which satisfies Eq. (\ref{eq:budget_constraint}) by bisection search. The inner loop uses Algorithm \ref{alg:fw_back_sweep} (with (\ref{eq:hamil_max_cond_control_val1}) replaced by (\ref{eq:hamil_max_cond_control_val_budget})) to compute the controls. The outer loop adjusts the multiplier value by the bisection algorithm (initialized by appropriate guesses for minimum and maximum values for $\mu^*$) till the budget constraint (\ref{eq:budget_constraint}) is satisfied with desired accuracy. If the bisection search takes a maximum of $Q$ iterations to reach desired accuracy, Algorithm \ref{alg:fw_back_sweep} has to be used $Q$ times in the worst case\footnote{If $\mu_{high},\mu_{low}$ and $\mu_{th}$ are the initial guesses and the desired accuracy for $\mu$ in the bisection algorithm, we can compute $Q$ from the following: $(1/2)^Q(\mu_{high}-\mu_{low})\leq\mu_{th}\Rightarrow Q = \left\lceil \log_2((\mu_{high}-\mu_{low})/\mu_{th}) \right\rceil$. Here, $\lceil x \rceil$ is the least integer greater than or equal to $x$.}. Thus, in the worst case, a system of $N$ differential equations has to be solved $PQ$ times which has a computational complexity of $O(N)$.

\section{Node Centrality Measures to Form $M$ groups}
\label{sec:centralities}

We have used four commonly used notions of centralities in a network for dividing the individuals into $M$ groups. We briefly describe these notions here. Detailed discussions can be found in, for example, \cite[Chap. 7]{newman2009networks}.

\emph{Closeness Centrality:} Let $d_{ij}$ represent the length of a geodesic path from node $i$ to $j$---shortest path from $i$ to $j$ through the network. Then, $l_i \triangleq \sum_{j\in\mathbb N}d_{ij}/N$, takes small values for nodes which have smaller geodesic paths to other nodes in a network with a single giant component. Closeness centrality for node $i$ is $C_i \triangleq 1/l_i$. The nodes with larger values of closeness centrality are, on the average, at smaller distance from other nodes in the network. Thus, these nodes are potentially better spreaders.

\emph{Betweenness Centrality:} Let $n_{pq}^i=1$ if node $i$ lies in a geodesic path between nodes $p$ and $q$. Then betweenness centrality of node $i$ is $B_i \triangleq \sum_{p,q\in\mathbb N}n_{pq}^i$. It measures the extent to which node $i$ lies in the shortest paths between all pairs of nodes in the network. Therefore betweenness centrality can identify good spreaders in the network.

\emph{Degree Centrality:} The degree of a node in the network---number of connections to other nodes---is also known as degree centrality of the node. Intuitively, the nodes with more connections are better spreaders.

\emph{Pagerank Centrality:} The nodes connected to more central nodes in the network should have higher centrality. Let $P_i$ represent the pagerank centrality of node $i$. Then, the pagerank centralities of the nodes in the network satisfy the fixed point equations: $P_i = \eta \sum\limits_{j\in\mathbb N}A_{ij}P_j/k_j+\delta$, $1\leq i\leq N$. Here, $k_j$ is the degree of node $j$. We choose $\eta=0.85$ and $\delta=1$ in this paper. We choose pagerank as one of the measures because it generalizes other measures like degree centrality and eigenvector centrality\footnote{Eigenvector centrality of node $i$ is equal to the $i$th entry in the eigenvector corresponding to the largest eigenvalue of the adjacency matrix.} and has found widespread use in ranking pages on the world wide web \cite[Chap. 7]{newman2009networks}.

\section{Results}
\label{sec:results}

\emph{Network and Default Model Parameters:} We present our results on networks sampled from giant components of three real world social networks: `condensed matter archive' scientific collaboration network, the undirected version of Slashdot network, and Facebook network \cite{snap}. The (sampled) networks are of sizes $12000$, $12000$ and $4020$ nodes respectively. We sample the networks from a random starting node using breadth first traversal. These networks provide the adjacency matrices $\boldsymbol A$ used to demonstrate results in this section.\footnote{The original scientific collaboration, Slashdot and Facebook networks have $23133$, $77350$, and $4039$ nodes respectively. They contain $93497$, $516575$ and $88234$ edges, and have a diameter of $14$, $11$ and $8$ respectively \cite{snap}.}

The spreading rate and the deadline together decide the extent of spreading in the SI process. Hence, we have fixed the deadline to $T=1$ time unit for all the results presented in this paper, and have varied the spreading rate in the parameter sweep studies. We choose the default values for the spreading rates to be $\beta=0.1,0.04,0.035$ for the scientific collaboration, Slashdot and Facebook networks respectively. These values are chosen because they lead to small fractions of infected nodes at the deadline---$0.1007,0.1050,0.1048$---in the respective uncontrolled systems. Such a system, where there is scope to reach more individuals, will benefit from campaigning.

The cost function for group $m$ in (\ref{eq:cost_funtion}) is chosen to be, $g_m(u_m(t))=bp_mu_m^2(t)$. Here, $p_m=|\mathbb N_m|/N$ as explained earlier is the fraction of nodes in group $m$. The factor $b$ weighs the reward due to the final fraction of infected nodes and the cost of application of controls in the net reward function (\ref{eq:cost_funtion}). The default value is set to $b=25$ for all the three networks.

\emph{Division into Groups:} We have divided the whole network into $M$ equal groups for the purpose of applying controls. Each group is influenced by a separate control function. Thus, $p_m=1/M,~\forall m$. The groups were created based on the four centrality measures discussed in Sec. \ref{sec:centralities}. The values of the net reward achieved by following the four grouping strategies identify the better-performing centrality measures. We calculate node centralities for all nodes in the network. The last $N/M$ nodes with the least value of the centrality are placed in group $1$, the next $N/M$ nodes are placed in group $2$, ... , and the top $N/M$ nodes are placed in group $M$.

\emph{Heuristic Controls:} We compare the results obtained by the optimal control strategies with two heuristic strategies. The first one is the \emph{\textbf{best} static/constant control}. It is constant over time throughout the campaign duration and maximizes the net reward in (\ref{eq:cost_funtion}). The same control is applied to all nodes (or groups) and the seed is uniformly distributed throughout the network. Comparison of the optimal strategy with the best constant control reveals the improvement due to dynamic resource allocation over time and heterogeneous allocation over groups.

The second heuristic control is the \emph{\textbf{best} two-stage control}. It is a simple dynamic control which is constant in $t\in[0,T/2]$ and $0$ for the rest of the campaign duration. The value of the non-zero part is selected so that the reward in (\ref{eq:cost_funtion}) is maximized. Again, seeds are selected uniformly from the population and the same control is applied to all groups. Theorem \ref{thm:control_noninc} suggests that the initial period of the campaign is more important and requires stronger control, which motivates this heuristic strategy.

Since both the heuristic controls are optimized controls in a restricted class of functions, an optimization routine calculates the heuristic controls numerically. If the maximum number of optimization iterations is fixed to $S$, in the worst case, computation of the heuristic strategies requires solving a system of $N$ differential equations $S$ times. Thus, the worst case computational complexity is $O(N)$.

\subsection{Shapes of the Controls and Importance of the Groups (Given Seed)}

\begin{figure}[ht!]
\centering 
\includegraphics[width=60mm]{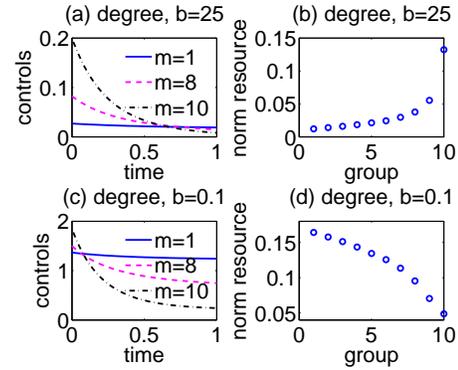}
\caption{\small{Shapes of controls and normalized (or per capita) resource consumption for the \textbf{scientific collaboration network} for degree centrality. Parameter values: $\beta=0.1,~i_{0m}=0.01\forall m$. For (a) and (b) resource is scarce ($b=25$), (c) and (d) shows abundant resource case ($b=0.1$).}}
\label{fig:norm_res_b_25_and_b_pt1}
\end{figure}

\begin{figure}[ht!]
\centering 
\includegraphics[width=60mm]{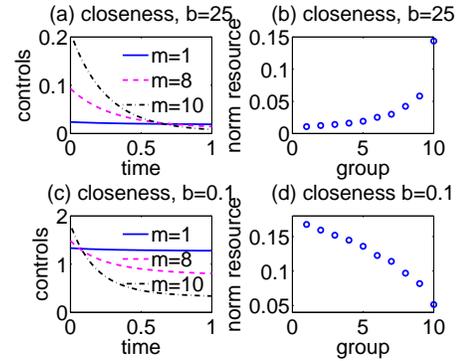}
\caption{\small{Shapes of controls and normalized (or per capita) resource consumption for the \textbf{Slashdot} network for closeness centrality. Parameter values: $\beta=0.04,~i_{0m}=0.01\forall m$. For (a) and (b) resource is scarce ($b=25$), for (c) and (d) resource is abundant ($b=0.1$).}}
\label{fig:norm_res_b_25_and_b_pt1_slashdot}
\end{figure}

\subsubsection{Shapes of the Controls}

We plot shapes of the controls and normalized resource allocation for the scientific collaboration and Slashdot networks in Figs. \ref{fig:norm_res_b_25_and_b_pt1} and \ref{fig:norm_res_b_25_and_b_pt1_slashdot} respectively. Here the networks are divided into $M=10$ groups. These plots are solutions of Problem (\ref{eq:opt_prob}) for the case in which seeds are uniformly selected from the population (and are not optimization variables). We have shown results for only one centrality measure in both the cases, other centrality measures lead to same qualitative trends hence plots are removed for brevity.

Figs. \ref{fig:norm_res_b_25_and_b_pt1}a and \ref{fig:norm_res_b_25_and_b_pt1_slashdot}a plots the controls for three representative groups for the scarce resource case ($b=25$), and Figs. \ref{fig:norm_res_b_25_and_b_pt1}c and \ref{fig:norm_res_b_25_and_b_pt1_slashdot}c for the abundant resource case ($b=0.1$).  The figures show that the initial period is more important for campaigning in a susceptible-infected epidemic---early infection of nodes gives them more time to infect others, hence this behavior.

The resource allocation is more variable over time for the group containing more central nodes ($m=10$) than the group containing less central nodes ($m=1$) for both the cases---when the resource is scarce ($b=25$, in Figs. \ref{fig:norm_res_b_25_and_b_pt1}a, \ref{fig:norm_res_b_25_and_b_pt1_slashdot}a) and when it is abundant ($b=0.1$, in Figs. \ref{fig:norm_res_b_25_and_b_pt1}c, \ref{fig:norm_res_b_25_and_b_pt1_slashdot}c). Notice that for the convex-increasing (specifically quadratic) instantaneous cost of application of the control considered in this work, a uniform shape over time has more aggregate effort than any non-uniform shape for a given amount of resource\footnote{Consider the following example: Suppose decision horizon is discretized into three intervals and the control strength is constant within the interval. A control profile with $0.8,0.4,0.1$ units of effort in the three intervals uses $0.81 c$ units of resource (for a constant $c$), and has an aggregate effort of $1.3$ units over the three intervals. However, a uniform profile which uses same amount of resource has an effort of $\approx 0.5196$ units in each interval, \emph{i.e.,} an aggregate effort of $\approx 1.5588$ units over the three intervals. This is $\approx 20\%$ greater than the aggregate effort in the previous case.}. This is so because the per unit cost of applying extra control strength grows super-linearly at any point. In-spite of this, controls are sharply decreasing for the high centrality group ($m=10$). This is because central nodes are better spreaders, so it is useful to target them early---even though it may be costly---because it will lead to further infection.

In contrast, controls are more uniform over time for low centrality nodes. They are not only poor spreaders but are also disadvantaged in receiving the message due to their non-central positions. The focus of the optimal strategy for these nodes is to apply the maximum possible control effort (by keeping the control signal uniform) so that maximum direct recruitment---due to more aggregate effort---can be achieved. There is not much benefit from early infection of these low centrality nodes.

\subsubsection{Importance of the Groups}

To identify the important groups for campaigning, we plot normalized resource consumed over the complete campaign horizon by the $M$ groups. Figs. \ref{fig:norm_res_b_25_and_b_pt1}b, \ref{fig:norm_res_b_25_and_b_pt1_slashdot}b correspond to the scarce resource case ($b=25$) for the two networks. Figs. \ref{fig:norm_res_b_25_and_b_pt1}d, \ref{fig:norm_res_b_25_and_b_pt1_slashdot}d correspond to the abundant resource case ($b=0.1$). The normalized (or per capita) resource consumed by group $m$ is calculated as $r_{per~capita}^m = (1/p_m)\int_0^Tg_m(u_m(t))dt=b\int_0^Tu_m^2(t)dt$.

From Figs. \ref{fig:norm_res_b_25_and_b_pt1}b and \ref{fig:norm_res_b_25_and_b_pt1_slashdot}b we conclude that in the scarce resource case ($b=25$), the optimal strategy targets the groups with higher centrality measures. They are better spreaders in the population and this leads to the best utilization of the available resources because their infection leads to further information spread through epidemic process.

However, when resource is abundant, $b=0.1$ (Figs. \ref{fig:norm_res_b_25_and_b_pt1}d, \ref{fig:norm_res_b_25_and_b_pt1_slashdot}d), groups with lower centrality measures get more allocation. In this case, we have enough resources, and targeting the groups which are at a disadvantage in receiving the message through epidemic process is a better strategy. Note that more central nodes are better positioned to receive the message and hence receive relatively less allocation when the resource is abundant.

\begin{figure}[ht!]
\centering 
\includegraphics[width=60mm]{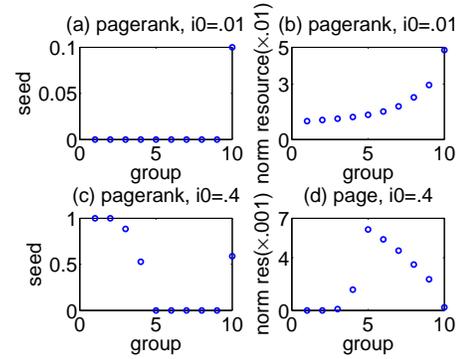}
\caption{\small{Fraction of seed in each group and normalized (or per capita) resource consumption for the \textbf{Slashdot network} for pagerank centrality. Parameter values: $b=25,~\beta=0.04$. (a) and (b) have small seed budget, (c) and (d) have large seed budget.}}
\label{fig:seed_i0_01_and_i0_4_b_25_slashdot}
\end{figure}

\subsection{Seed and Resource Allocation by the Joint Problem}

For the joint Problem (\ref{eq:opt_prob}) in which both the dynamic resource allocation and seeds for the epidemic are optimized, we plot the fractions of population selected as seeds and the normalized resource allocated to each group in Fig. \ref{fig:seed_i0_01_and_i0_4_b_25_slashdot} for the Slashdot network for the pagerank centrality measure. We choose number of groups, $M=10$ and consider two cases---when the seed budget is low, $i_0=0.01$, (Figs. \ref{fig:seed_i0_01_and_i0_4_b_25_slashdot}a, \ref{fig:seed_i0_01_and_i0_4_b_25_slashdot}b) and when the seed budget is high, $i_0=0.4$, (Figs. \ref{fig:seed_i0_01_and_i0_4_b_25_slashdot}c, \ref{fig:seed_i0_01_and_i0_4_b_25_slashdot}d). Plots for other centrality measures and networks show similar trend and are omitted for brevity.

We see a behavior similar to that in the previous section in the allocation. Central nodes are selected as seeds when the seed budget is low, owing to there spreading strength. In contrast, for the high seed budget case, primarily disadvantaged nodes are selected because epidemic spreading cannot be relied upon to inform them. However, when seed budget is high, some central nodes are also selected as seeds because the optimal strategy wants some good spreaders in the population. For high seed budget case, controls too target the non-central nodes (groups which are not already completely seeded) as seen in Fig. \ref{fig:seed_i0_01_and_i0_4_b_25_slashdot}d.

\begin{figure}[ht!]
\centering 
\hspace{-1em}
\includegraphics[width=88mm]{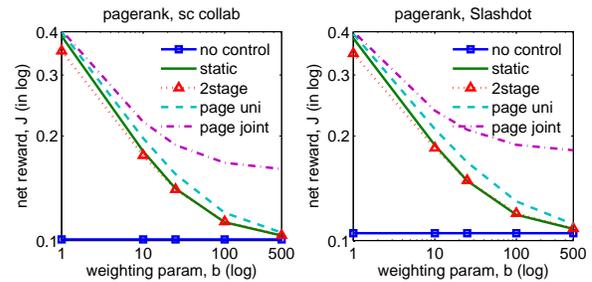}
\caption{\small{Reward $J$ vs weighting parameter $b$ for pagerank centrality.}}
\label{fig:J_vs_b_pagerank_all_nw}
\end{figure}

\begin{figure*}[ht!]
\centering 
\subfloat[Scientific collaboration network. \label{fig:J_vs_b_percent_improve}]{
\includegraphics[width=57mm]{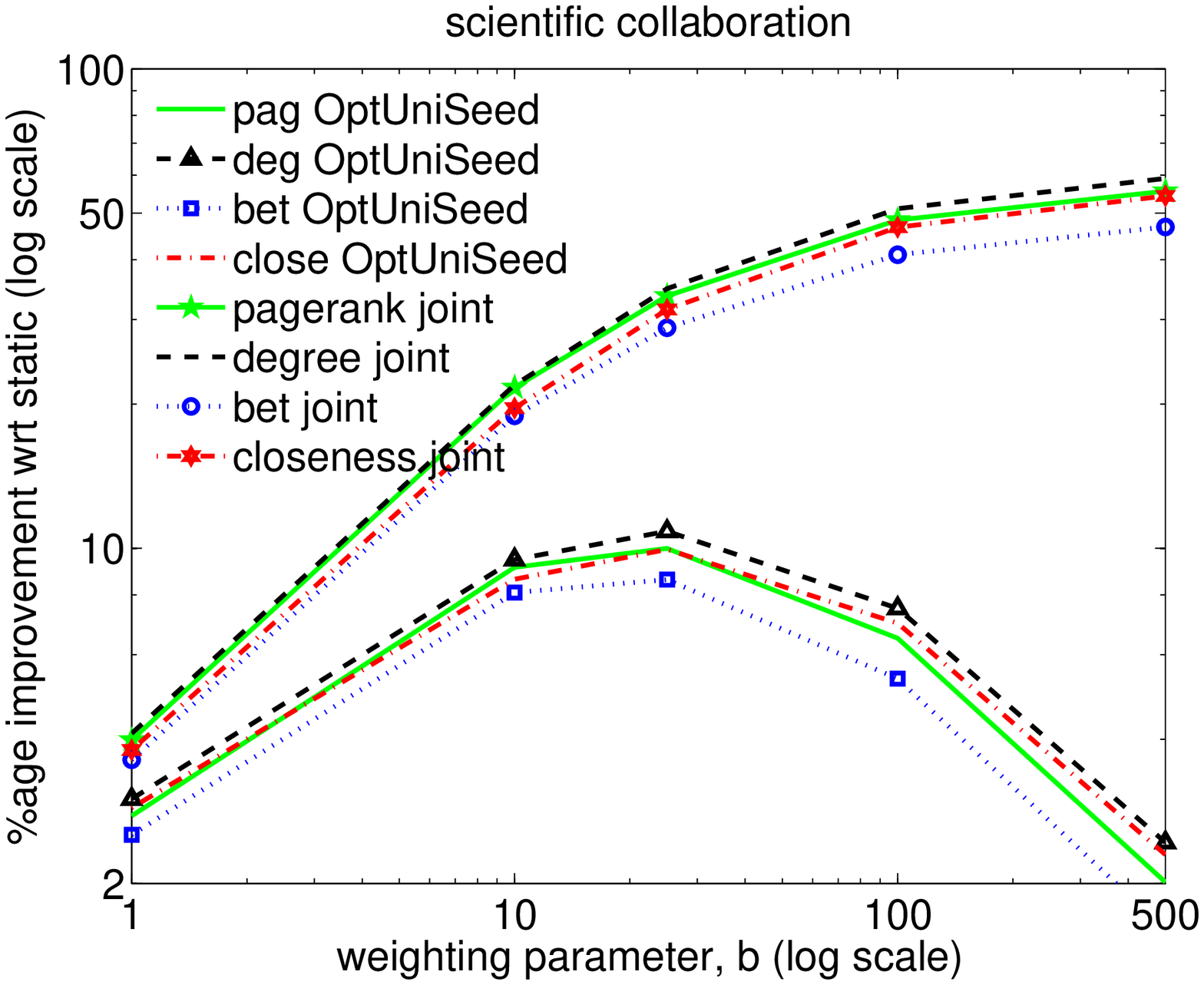} }
\hfill
\subfloat[Slashdot network. \label{fig:J_vs_b_percent_improve_slashdot}]{
\includegraphics[width=57mm]{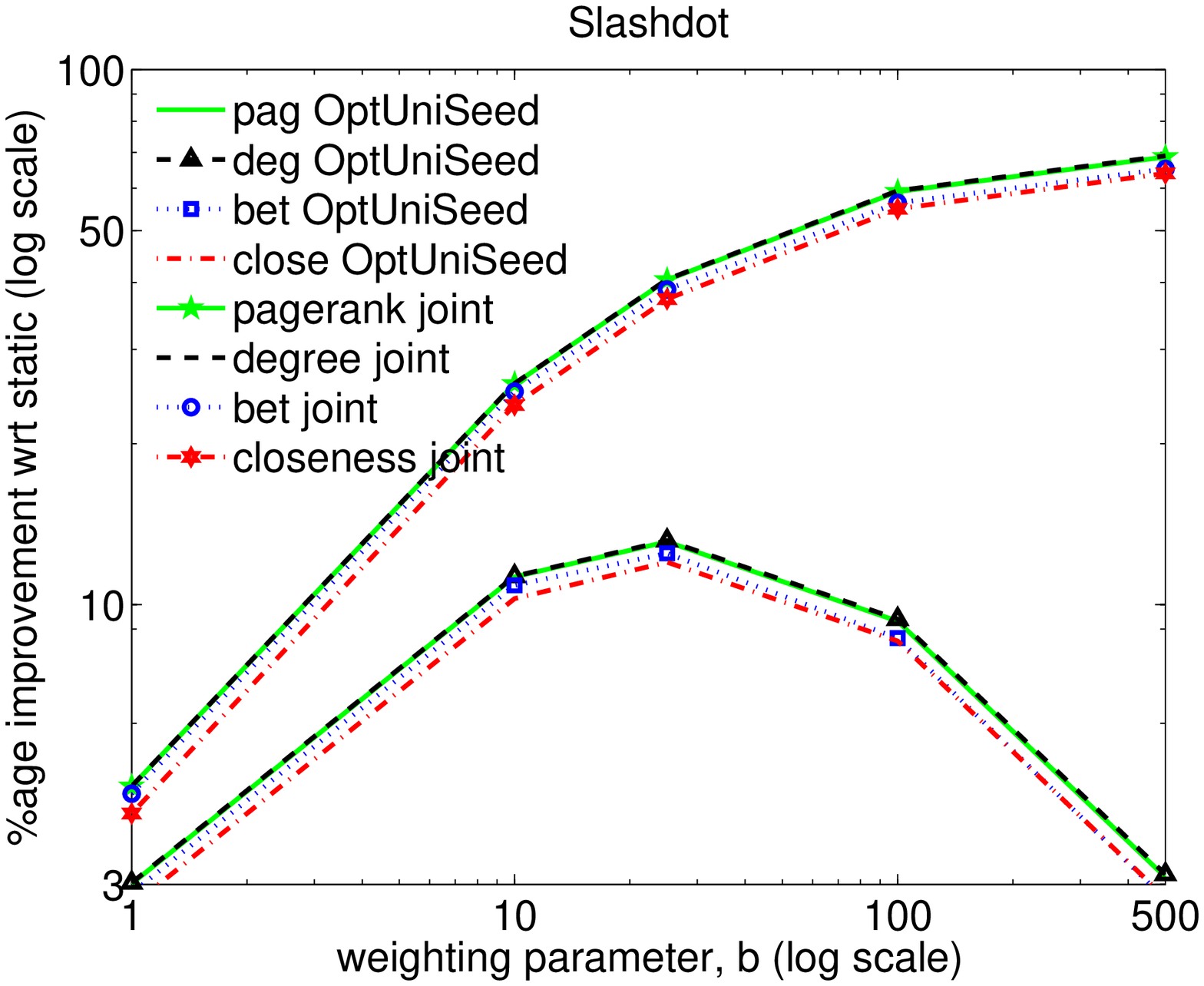} }
\hfill
\subfloat[Facebook network. \label{fig:J_vs_b_percent_improve_facebook}]{
\includegraphics[width=57mm]{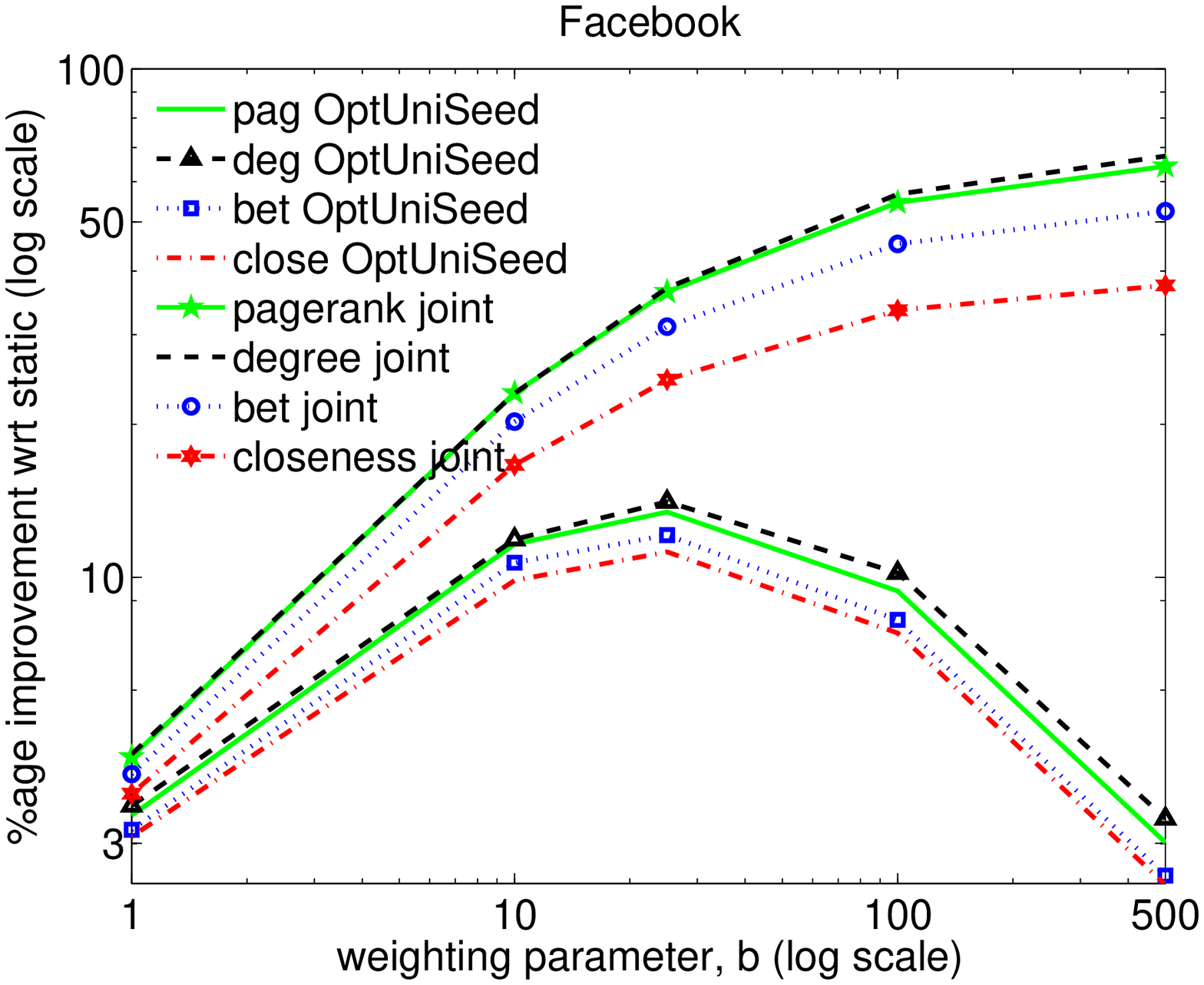} }
\caption{\small{Variation of percentage improvements in the net reward achieved by various optimal strategies over the static control strategy, with respect to weighting parameter $b$. Larger $b\Rightarrow$ costly resource. Parameter values for scientific collaboration network: $\beta=0.1,~i_{0}=0.01,~M=5$; Slashdot: $\beta=0.04,~i_{0}=0.01,~M=5$; Facebook:  $\beta=0.035,~i_{0}=0.01,~M=5$.}}
\label{fig:J_vs_b}
\end{figure*}

\subsection{Effect of Model Parameters on the Net Reward, and Identifying Better Centrality Measures}

We study the effect of varying three system parameters---the cost of application of controls $b$, the spreading rate $\beta$, and the number of groups $M$---on the scientific collaboration, Slashdot and Facebook networks in this section.

\subsubsection{Varying the Cost of Application of Controls}
\label{sec:results_J_vs_b}

Fig. \ref{fig:J_vs_b_pagerank_all_nw} shows the variation of the net reward function, $J$ with the cost of application of controls---captured by varying the parameter $b$ in the instantaneous cost function $g_m(u_m(t))=bu_m^2(t)$---for the pagerank centrality measure for the scientific collaboration and Slashdot networks. We present results for both the cases---when only time varying resource is optimized (seeds are uniformly selected from the population), and for the joint optimization of the seeds and dynamic resource allocation. We compare the performance of the two optimal strategies with the two heuristic strategies. Other networks and centrality measures have similar trend, so plots are omitted for brevity. Figs. \ref{fig:J_vs_b_percent_improve}, \ref{fig:J_vs_b_percent_improve_slashdot}, \ref{fig:J_vs_b_percent_improve_facebook} plots the percentage improvement achieved by optimal strategies over the static strategy for the three networks for all four centrality measures.

Fig. \ref{fig:J_vs_b_pagerank_all_nw} shows that as resource becomes costly, and hence scarce, we are able to reach fewer individuals in the network (due to lack of resources only epidemic spreading disseminates information, optimal control does not help). We observe from the percentage improvement data in Fig. \ref{fig:J_vs_b} that optimal time varying resource allocation alone (without seed optimization) achieves substantial gains for a window of  intermediate values of $b$. If the resource is abundant (small $b$), even the static strategy reaches a lot of nodes---leading to a small percentage improvement in optimal strategy compared to the static strategy. On the other hand, scarce resource (large $b$) is not enough to influence many people, which explains such a trend. The joint seed--dynamic resource allocation achieves much higher gains. In addition to using the resource in the best possible way, the epidemic process is also enhanced due to seed optimization. This is particularly useful when the resource is scarce (high values of $b$).

It is worth noting that, node degree---in-spite of being a simple and local centrality measure---does not have any disadvantage over more complicated measures. This behavior is consistent for different networks (and also when other system parameters are varied in Figs. \ref{fig:J_vs_beta} and \ref{fig:J_vs_M}). This observation may be useful because node degree can be locally computed.\footnote{A possible reason why we observe such behavior is that epidemic spreading is local. The message spreads from a seed only to a few hops; thus degree---a local measure---is a better metric to identify better spreaders than betweenness or closeness which depend on all other nodes in the network, most of which are at large distances from the node in question.}

\begin{figure}[ht!]
\centering 
\hspace{-1em}
\includegraphics[width=88mm]{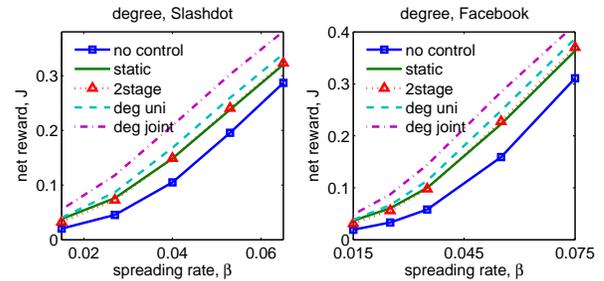}
\caption{\small{Reward $J$ vs spreading rate $\beta$ for degree centrality.}}
\label{fig:J_vs_beta_degree_all_nw}
\end{figure}

\begin{figure*}[ht!]
\centering 
\subfloat[Scientific collaboration network. \label{fig:J_vs_beta_percent_improve}]{
\includegraphics[width=57mm]{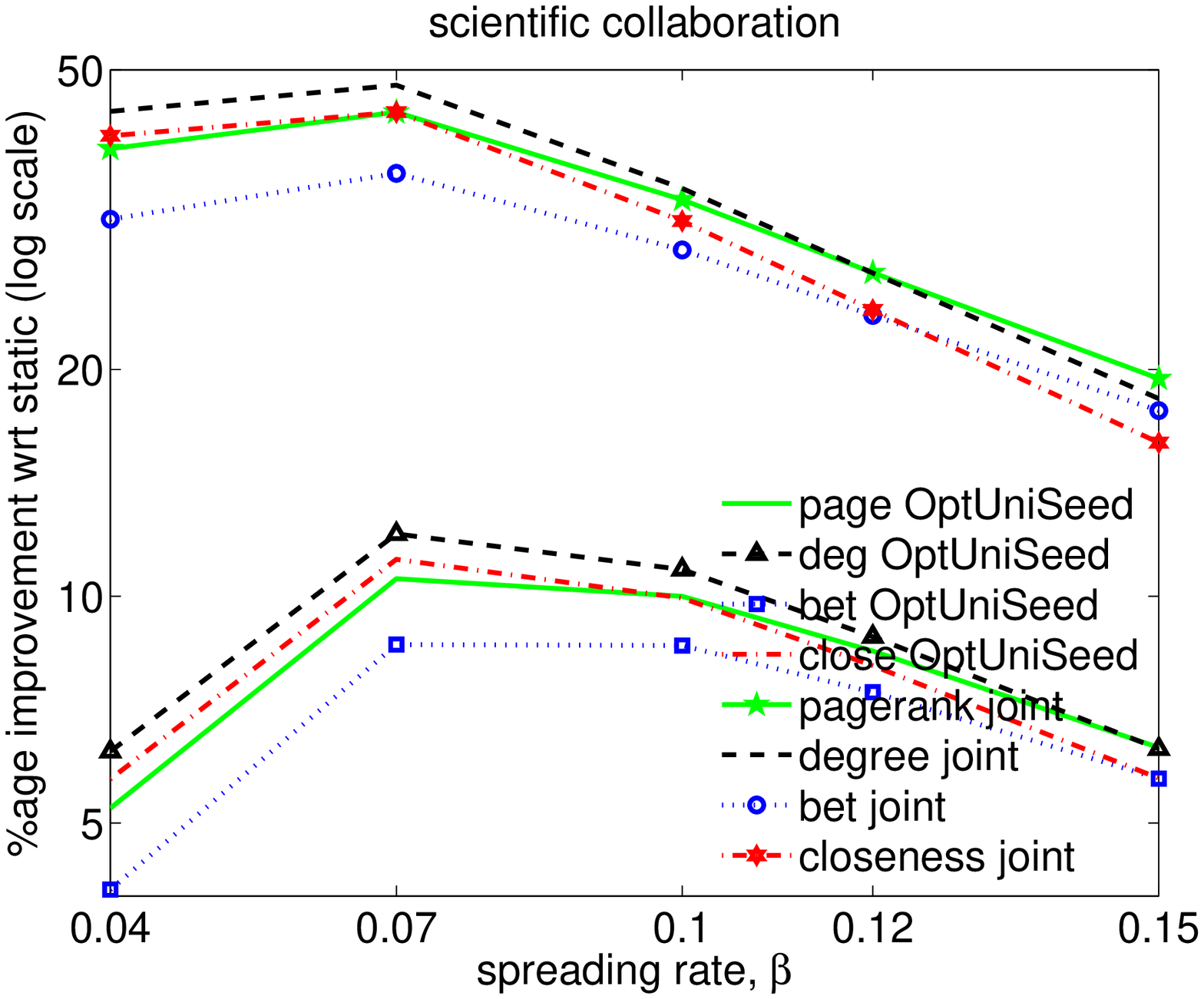} }
\hfill
\subfloat[Slashdot network. \label{fig:J_vs_beta_percent_improve_slashdot}]{
\includegraphics[width=57mm]{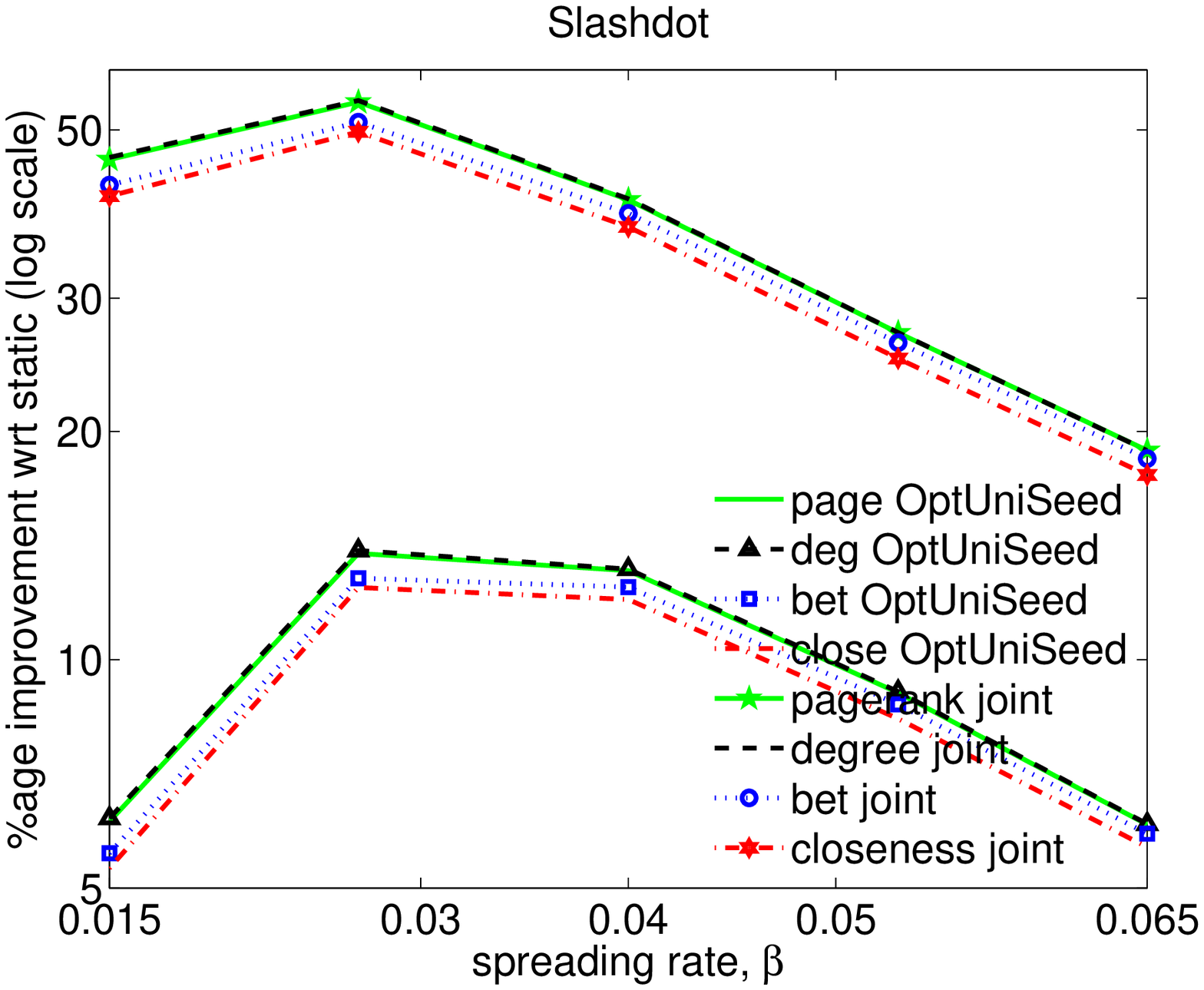} }
\hfill
\subfloat[Facebook network. \label{fig:J_vs_beta_percent_improve_facebook}]{
\includegraphics[width=57mm]{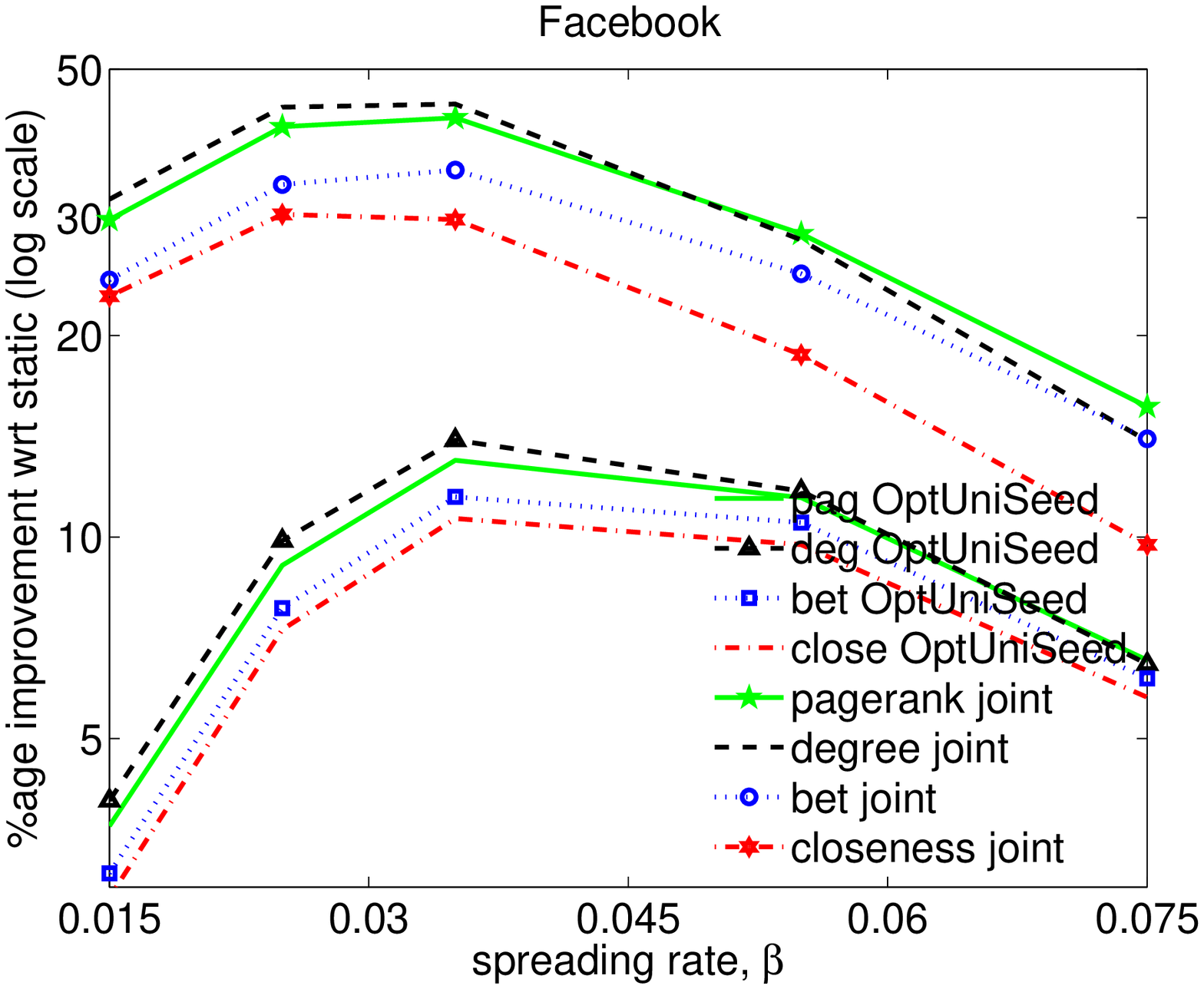} }
\caption{\small{Variation of percentage improvements in the net reward achieved by various optimal strategies over the static control strategy, with respect to information diffusion rate $\beta$. Parameter values for scientific collaboration network: $b=25,~i_{0}=0.01,~M=5$; Slashdot: $b=25,~i_{0}=0.01,~M=5$; Facebook:  $b=25,~i_{0}=0.01,~M=5$.}}
\label{fig:J_vs_beta}
\end{figure*}

\subsubsection{Varying the Spreading Rate, $\beta$}

Figs. \ref{fig:J_vs_beta_degree_all_nw} and \ref{fig:J_vs_beta} plot the performance of the optimal strategies with respect to the spreading rate $\beta$ for the three networks. Larger $\beta$---faster spreading---leads to more informed population at the deadline for any strategy followed. This is the reason for increasing trend in the curves in Fig. \ref{fig:J_vs_beta_degree_all_nw}. From the percentage improvement data (Fig. \ref{fig:J_vs_beta}), we observe that optimal strategies are more beneficial for a window of intermediate values of $\beta$. When $\beta$ is high, even heuristic strategies reach a lot of people and hence optimal strategies achieve only a low percentage improvement over them. When $\beta$ is low, the advantage of targeting good spreaders by direct recruitment is low due to slow spreading; this causes small percentage improvement in the optimal strategies over the static strategy.

\begin{figure}[ht!]
\centering 
\hspace{-1em}
\includegraphics[width=88mm]{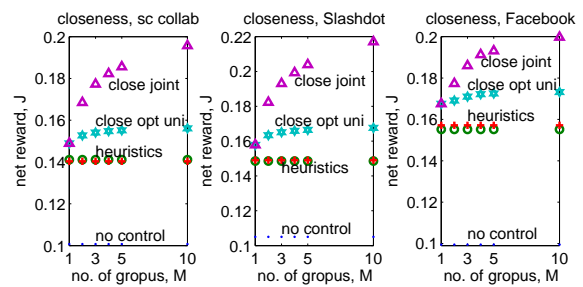}
\caption{\small{Reward $J$ vs number of groups $M$ for closeness centrality.}}
\label{fig:J_vs_M_close_all_nw}
\end{figure}

\begin{figure*}[ht!]
\centering 
\subfloat[Scientific collaboration network. \label{fig:J_vs_M_percent_improve}]{
\includegraphics[width=57mm]{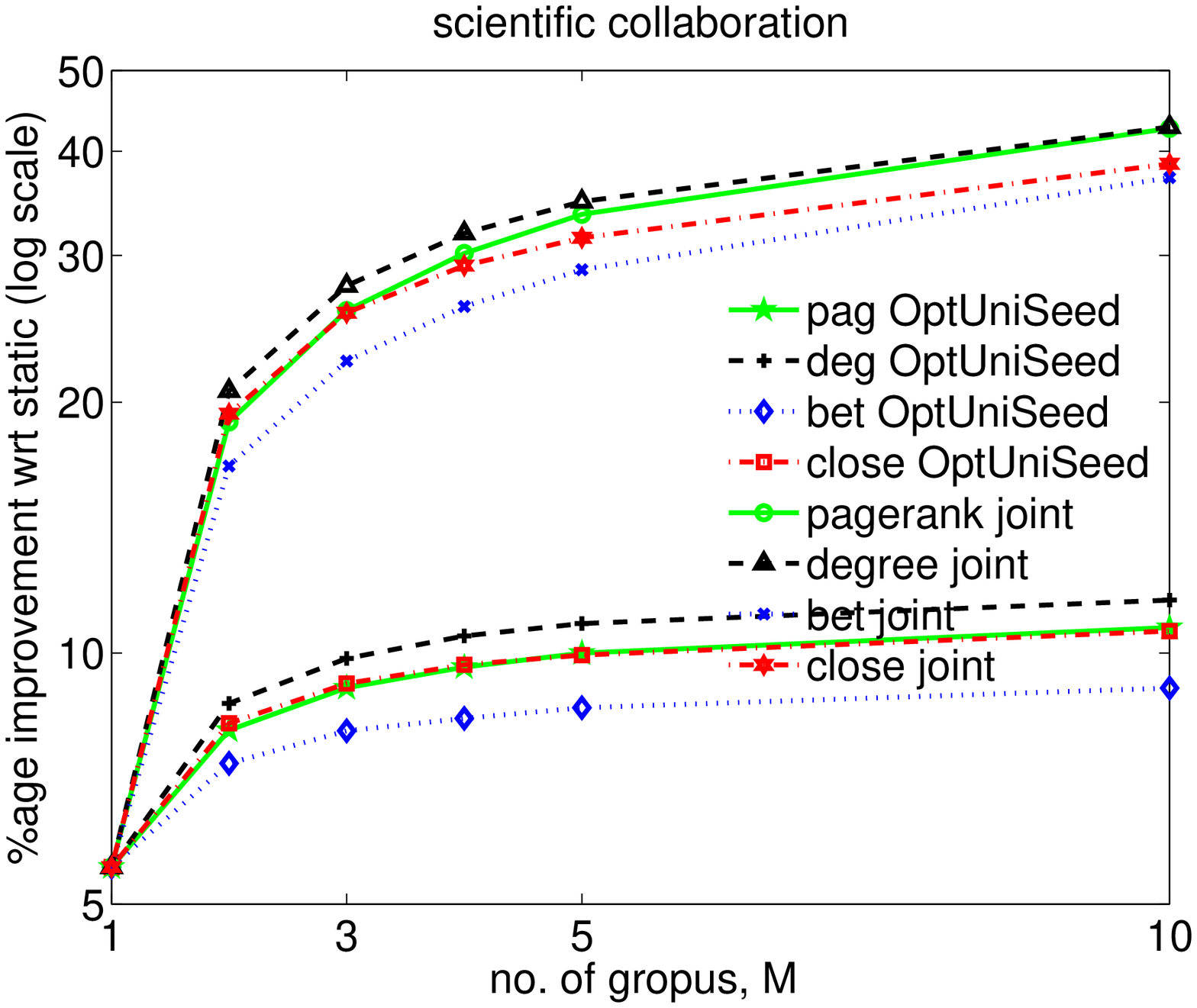} }
\hfill
\subfloat[Slashdot network. \label{fig:J_vs_M_percent_improve_slashdot}]{
\includegraphics[width=57mm]{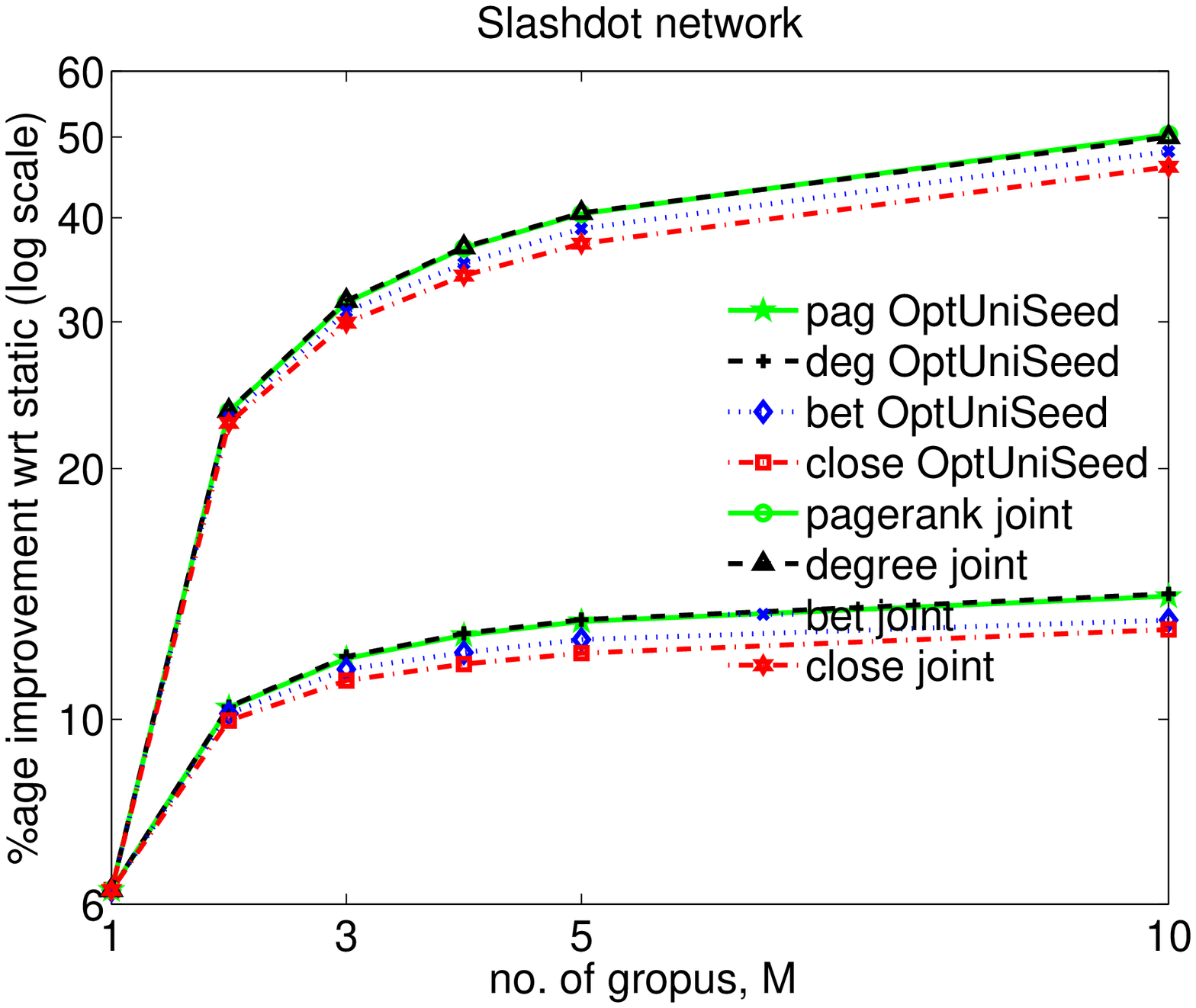} }
\hfill
\subfloat[Facebook network. \label{fig:J_vs_M_percent_improve_facebook}]{
\includegraphics[width=57mm]{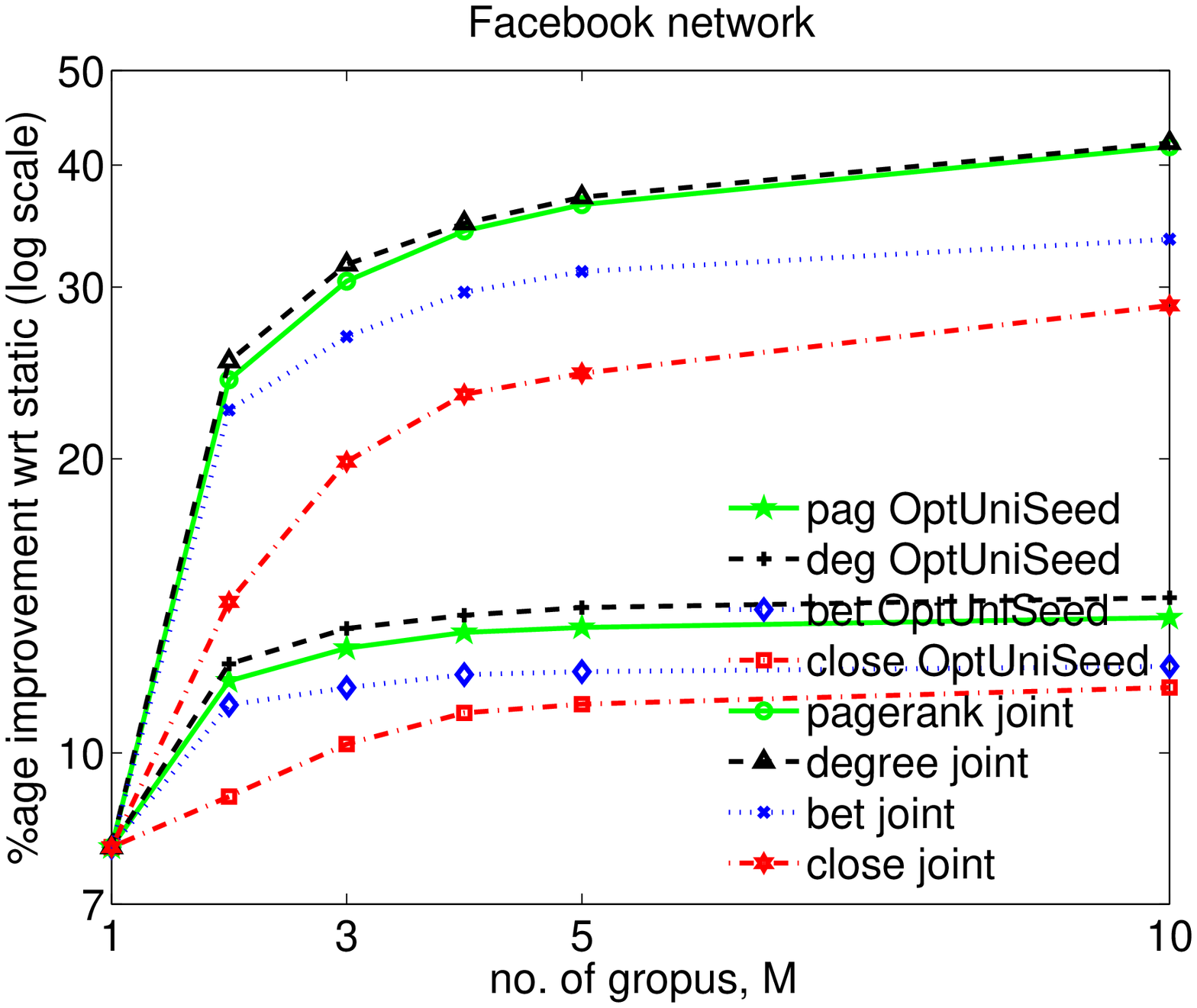} }
\caption{\small{Variation of percentage improvements in the net reward achieved by various optimal strategies over the static control strategy, with respect to number of groups $M$. Parameter values for scientific collaboration network: $b=25,~\beta=0.1,~i_{0}=0.01$; Slashdot: $b=25,~\beta=0.04,~i_{0}=0.01$; Facebook:  $b=25,~\beta=0.035,~i_{0}=0.01$.}}
\label{fig:J_vs_M}
\end{figure*}

\subsubsection{Varying the Number of Groups, $M$}

Figs. \ref{fig:J_vs_M_close_all_nw} and \ref{fig:J_vs_M} plot the impact of changing the number of groups into which nodes are divided for the purpose of applying controls. Finer division allows more flexible resource distribution, leading to a better reward value. For dynamic resource optimization without seed selection, we see a saturation in the percentage improvements at about $M=5$ for all three networks (Fig. \ref{fig:J_vs_M}) because $5$ groups seems to be enough to capture the disparity in centrality measures for the purpose of applying dynamic controls. However, joint allocation keeps achieving better results with increasing $M$. But the increase in percentage improvement due to addition of more groups is smaller at larger values of $M$.

\begin{figure}[ht!]
\centering 
\hspace{-1em}
\includegraphics[width=88mm]{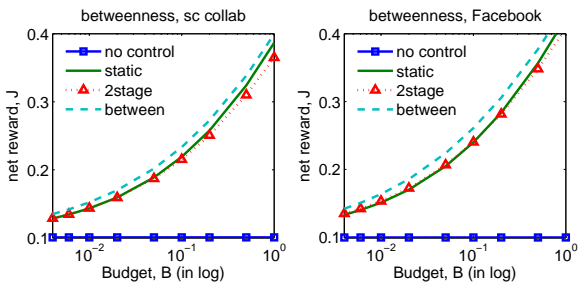}
\caption{\small{Reward $J$ vs budget $B$ for betweenness centrality.}}
\label{fig:J_vs_budget_between_all_nw}
\end{figure}

\begin{figure*}[ht!]
\centering 
\subfloat[Scientific collaboration network. \label{fig:J_vs_budget_percent_improve}]{
\includegraphics[width=57mm]{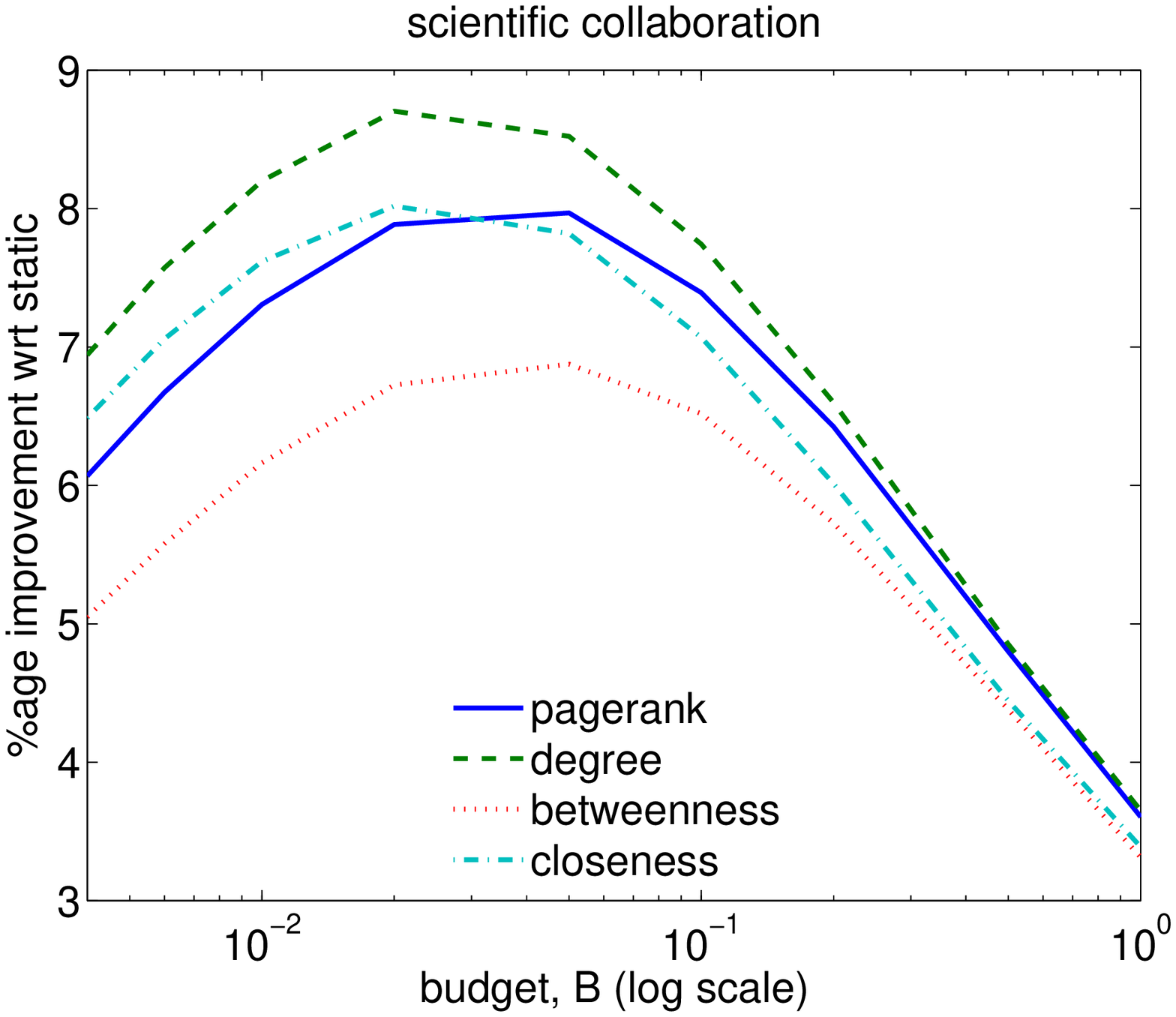} }
\hfill
\subfloat[Slashdot network. \label{fig:J_vs_budget_percent_improve_slashdot}]{
\includegraphics[width=57mm]{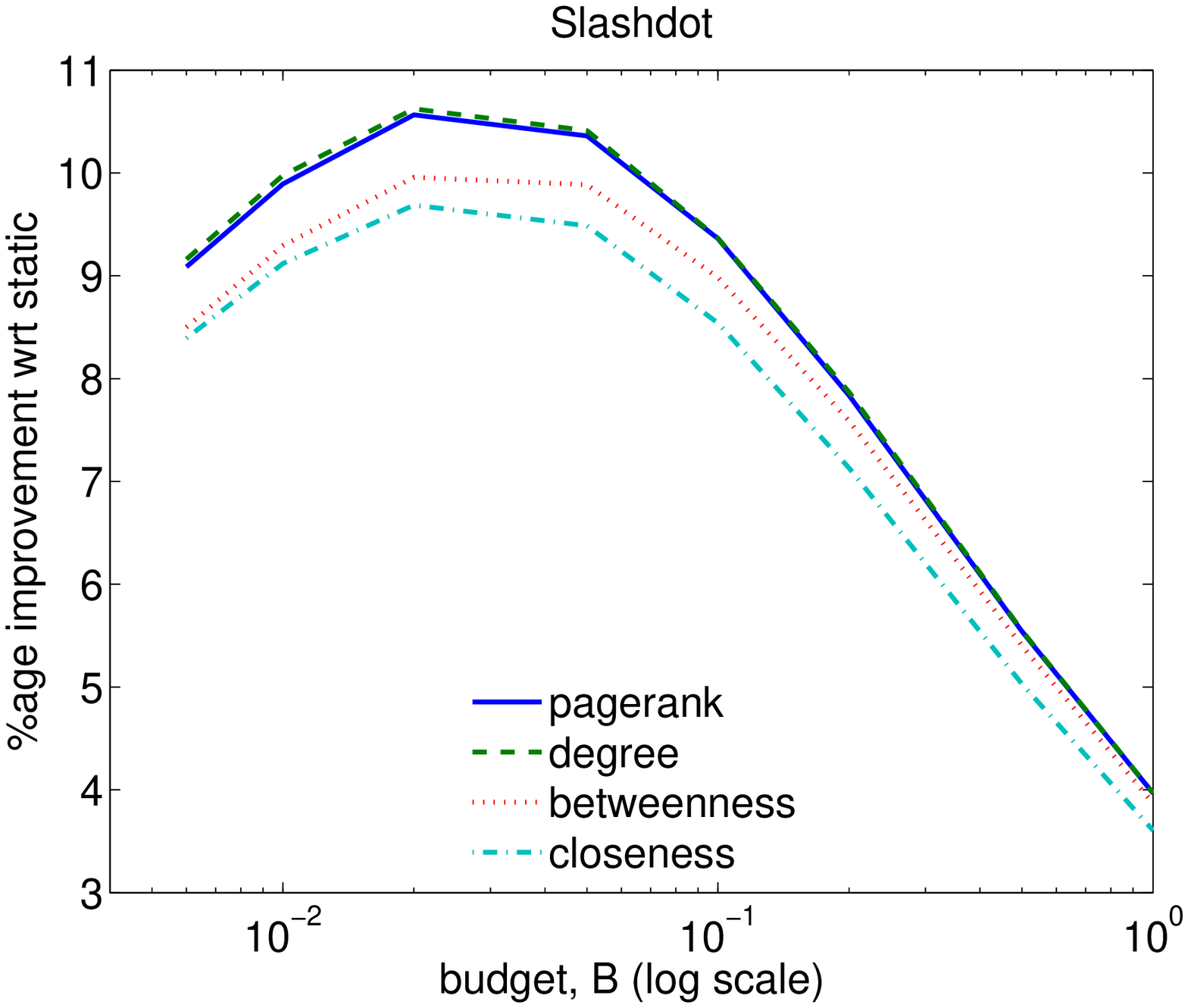} }
\hfill
\subfloat[Facebook network. \label{fig:J_vs_budget_percent_improve_facebook}]{
\includegraphics[width=57mm]{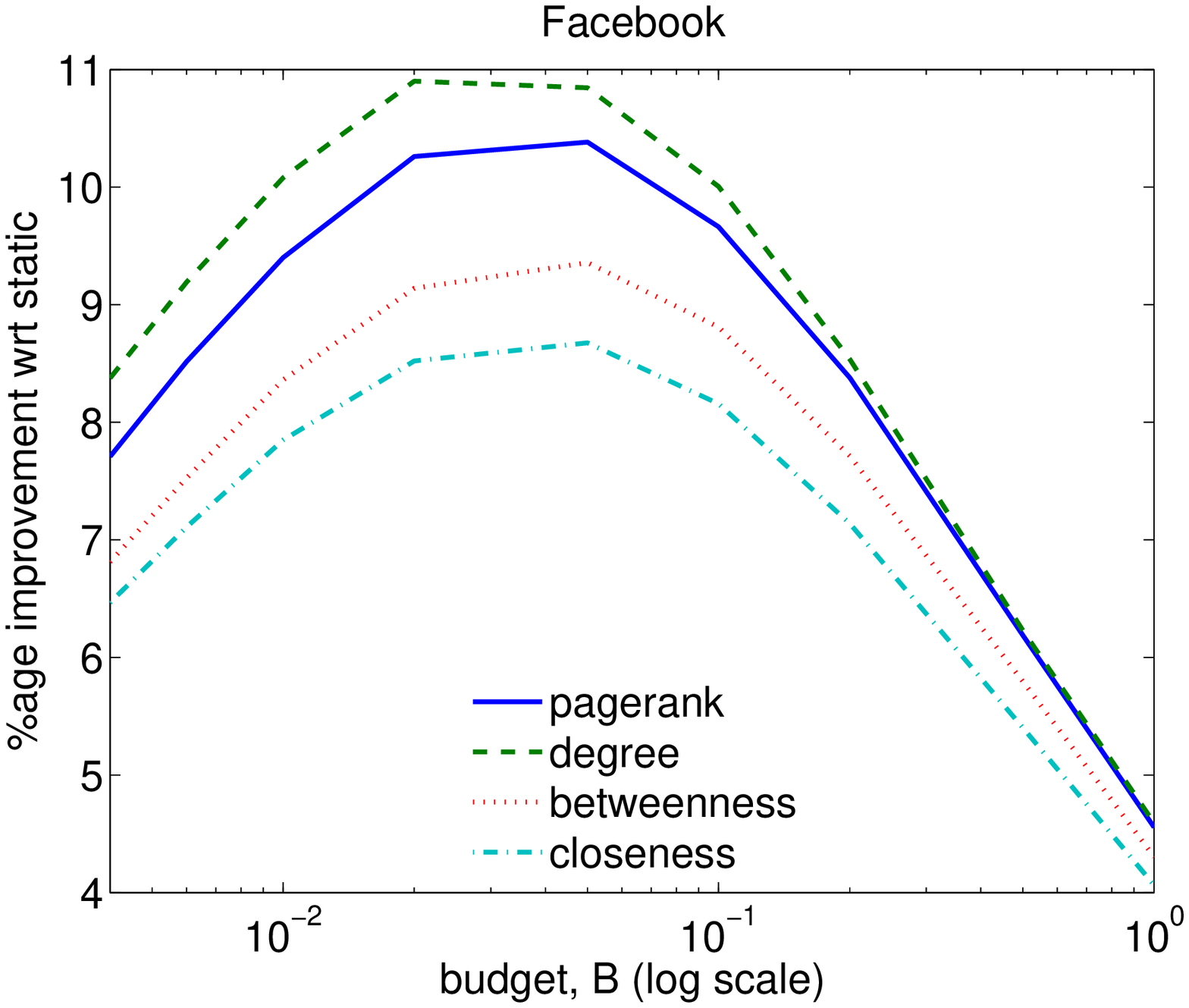} }
\caption{\small{Variation of percentage improvements in the net reward achieved by various optimal strategies over the static control strategy, with respect to budget $B$. Parameter values for scientific collaboration network: $b=25,~\beta=0.1,~i_{0}=0.01,~M=5$; Slashdot: $b=25,~\beta=0.04,~i_{0}=0.01,~M=5$; Facebook:  $b=25,~\beta=0.035,~i_{0}=0.01,~M=5$.}}
\label{fig:J_vs_budget}
\end{figure*}

\subsection{Fixed Budget}

For the problem with budget constraint (Problem (\ref{eq:opt_prob_budget})), we plot the variation of fraction of infected nodes at the deadline (the reward function in (\ref{eq:cost_funtion_budget})) and the percentage improvement over the static strategy with respect to the value of the budget in Figs. \ref{fig:J_vs_budget_between_all_nw} and \ref{fig:J_vs_budget} respectively. Due to the fixed budget, the heuristic controls can be uniquely computed (and optimization is not required as was the case earlier). Seeds are uniformly selected from the population for these plots. More budget means abundant resource and less budget corresponds to scarce resource. Hence, we can interpret the behavior of the percentage improvements in Fig. \ref{fig:J_vs_budget} as in Sec. \ref{sec:results_J_vs_b}.

\section{Future Work}

We list some future research directions based on our work:
\begin{enumerate}
\item Developing heuristic low complexity strategies for \emph{dynamic} resource allocation over the period of the campaign for huge social networks that scale to millions of nodes: Some of the insights derived from this work about resource allocation over time and node groups for different system parameters may prove useful in this regard. Also, nodes may be methodically assigned to groups based on the output of some suitably formulated optimization problem to further increase the reward function.

\item In this paper, an individual changes to the `infected' state based only on the states of her neighbors. There are models where an agent's strategy depends not only on the proportion of adopters but also on the collective social position of those adopters \cite{jiang2009concurrent} (\emph{e.g.}, a dominant agent has more effect on the system). Further, cost of influencing different agents in the network may vary \cite{wangpprank}. Devising strategies for optimal information diffusion in such a setting forms an interesting future research direction.

\item This paper takes a structure-oriented view \cite{jiang2014understanding} of behavior diffusion in social networks---social actors are uniform non-strategic nodes. Several authors have modeled nodes of the social network as strategic agents (see for \emph{e.g.} \cite{wang2015studying, jiang2015diffusion}). Devising optimal dynamic information diffusion strategies from the campaigner's perspective in the presence of selfish strategic agents may be of interest.
\end{enumerate}

\section{Conclusion}

We model the spread of information as a susceptible-infected epidemic process in a social network with known adjacency matrix. We use the theory of optimal control to devise strategies for jointly identifying good seeds and allocating campaign resources over time, to groups of nodes, such that a net reward function is maximized. The net reward function is a linear combination of the reward due to spread of a message in the network and the aggregate cost of applying controls. Groups are formed by aggregating nodes with similar values of a centrality measure. The centrality measures used are pagerank, degree, closeness and betweenness. We compare the performance of these centrality measures in information dissemination within the joint optimization--optimal control framework. We also study a problem variant which has a fixed budget constraint.

For the special case when each individual is influenced by a separate control, we show analytically that the controls are non-increasing functions of time. Furthermore, for quadratic costs, the controls are convex functions of time.

Previous benchmark studies on influence maximization considered only seed selection. The system evolved in an uncontrolled manner after seed selection. In contrast, our formulation allows the campaigner to allocate the resource throughout the campaign horizon in addition to seed selection---a situation more general and more practical. Other studies used optimal control for devising strategies for preventing biological epidemics and computer viruses, and in some cases maximizing information dissemination. However, they assumed homogeneous mixing of population which is not valid for information dissemination in social networks. In contrast, we study optimal control of a network with a given adjacency matrix. Our work is also different from the ones that focused on analyzing a given spreading strategy with no optimization/control involved.

We present results for real social networks---scientific collaboration, Slashdot and Facebook networks. We find the optimal strategies to be very effective compared to the simple heuristic strategies for a wide range of model parameters (quantified by the percentage improvements in rewards that the optimal strategies achieve over the heuristics). Our results show that seed and resource allocation based on the simple and local degree centrality performs well compared to other more complicated measures. When the resource is scarce, it is best to target groups with central nodes owing to their spreading strength. On the other hand, if the resource is abundant, groups with non-central nodes are targeted because such nodes are disadvantaged in receiving the message through epidemic spreading. Abundance of resource makes it possible to reach them.

The seed and dynamic resource allocation framework and the solution techniques presented in this paper are general, and can be used to study other ordinary differential equation based information dissemination or biological epidemic models on networks with given adjacency matrix. Examples of other processes running on networks whose dynamic control can be studied using our framework are susceptible-infected-susceptible/recovered (SIS/R), SIRS, Maki-Thompson rumor model etc. In addition, other centrality measures and strategies to group individuals for the purpose of applying controls can be incorporated into the framework. For example, social networks are known to have community structure, each community may form a group influenced by a separate control.

\footnotesize
\bibliographystyle{IEEEtran}
\bibliography{bibliography_database}

\end{document}